\newtheorem{theorem}{Theorem}%[section]
\newtheorem{observ}{Observation}%[section]
\newtheorem{corollary}{Corollary}
\newtheorem{lemma}{Lemma}
\newcommand{\lca}{\mathsf{lca}}
\DeclareMathOperator{\Le}{\mathsf{Le}\xspace}
\DeclareMathOperator{\cL}{\mathcal{L}\xspace}
\DeclareMathOperator{\cG}{\mathcal{G}\xspace}
\DeclareMathOperator{\cI}{\mathcal{I}}
\DeclareMathOperator{\qab}{\mathsf{ab|cd}}
\DeclareMathOperator{\qac}{\mathsf{ac|bd}}
\DeclareMathOperator{\qad}{\mathsf{ad|bc}}
\newcommand{\Q}{\mathbb{Q}}
\newcommand{\set}[1]{\{#1\}}
\begin{document}
%\mainmatter  

%\begin{frontmatter}

\title{Quartet-Based Inference Methods are Statistically Consistent Under the Unified Duplication-Loss-Coalescence Model}

\newcommand{\email}{}

\author{
Alexey Markin and Oliver Eulenstein\\
Department of Computer Science, \\
Iowa State University\\
Ames, IA, 50011, USA\\
\email{amarkin@iastate.edu $|$ oeulenst@iastate.edu}
}

\date{}

\maketitle

\begin{abstract}
	The classic multispecies coalescent (MSC) model provides the means for theoretical justification of incomplete lineage sorting-aware species tree inference methods. A large body of work in phylogenetics is dedicated to the design of inference methods that are statistically consistent under MSC. One of such particularly popular methods is ASTRAL, a quartet-based species tree inference method. A few recent studies suggested that ASTRAL also performs well when given multi-locus gene trees in simulation studies. Further, Legried et al. recently demonstrated that ASTRAL is statistically consistent under the gene duplication and loss model (GDL). Note that GDL is prevalent in evolutionary histories and is a part of the powerful duplication-loss-coalescence evolutionary model (DLCoal) by Rasmussen and Kellis. In this work we prove that ASTRAL is statistically consistent under the general DLCoal model. Therefore, our result supports the empirical evidence from the simulation-based studies. More broadly, we prove that a randomly chosen quartet from a gene tree (with unique taxa) is more likely to agree with the respective species tree quartet than any of the two other quartets.
\end{abstract}

{\bf Keywords:} phylogenetic trees, quartets, DLCoal, MSC, reconciliation, consistency, ASTRAL.

%\end{frontmatter}

\section{Introduction}
Accurate inference of evolutionary histories of species is one of the most challenging tasks in biology. One of the most hindering factors in the field is the difficulty of proper evaluation of computed species phylogenies. This is due to the fact that researchers rarely know the \emph{true} evolutionary history~\cite{Bininda-Emonds:2004}.

Therefore, the common strategy in the phylogenetic community is to rely on the established statistical models of evolution. One of the most prominent such models is the multispecies coalescent model~\cite{Rannala:2003MSC} that accounts for incomplete lineage sorting (ILS) also known as deep coalescence. ILS is a prevalent factor that causes discordance between the observed gene tree topologies and the host species tree~\cite{Allman:2018splits-MSC}. In fact, a large body of work in phylogenetics is dedicated to the design of species tree inference methods that are statistically consistent under MSC. Statistical consistency implies that as the number of observed gene trees grows, the species tree estimate converges to the true species tree that ``generated'' the observed data. Multiple phylogenetic inference methods have been demonstrated to be statistically consistent, cf. ASTRAL~\cite{Zhang:2018astral3}, NJst~\cite{Liu:2011NJst}, ASTRID~\cite{Vachaspati:2015astrid}, STAR~\cite{Liu:2009STAR}, STEM~\cite{Kubatko:2009STEM}, MP-EST~\cite{Liu:2010MP-EST}, BUCKy~\cite{Larget:2010BUCKy}, GLASS~\cite{Mossel:2008GLASS}, and others.

In the recent years ASTRAL became one of the most popular species tree inference methods by practitioners. Note that ASTRAL's objective function is built on the notion of \emph{quartets} (see Figure~\ref{fig:quartets}). In particular, the proof that ASTRAL is statistically consistent under MSC stems from two observations. First, Allman et al.~\cite{Allman:2011unrooted} demonstrated that if a species tree displays a quartet $q$ then $q$ is also the most likely observed (unrooted) gene tree topology. Second, it can be seen that every species tree clade will eventually appear in at least one of the observed gene trees.

More recently Legried et al. proved that a version of ASTRAL extended to work with multi-locus gene trees, called \emph{ASTRAL-one}, is statistically consistent under the gene duplication and loss model (GDL)~\cite{Legried:2019dl}. Note that GDL is a part of the well-recognized unified duplication-loss-coalescence (DLCoal) model of gene tree evolution by Rasmussen and Kellis~\cite{Rasmussen:2012unified}. DLCoal simultaneously accounts for three crucial types of evolutionary events that shape gene evolution. Namely, duplications, losses, and incomplete lineage sorting events. The DLCoal process involves two steps, (i) a birth/death process within the branches of the species tree creates a \emph{locus tree} (i.e., the GDL process) and (ii) a bounded multispecies coalescence process acting on the locus tree generates the observed \emph{gene tree}. See Figure~\ref{fig:dlc-example} for an example.

In this work we prove that ASTRAL-one is statistically consistent under the general DLCoal model. First, we derive gene tree probabilities (constrained to quartets) under the bounded multispecies coalescent model and draw core observations from that analysis. Second, we build on an idea from Legried et al. to systematically separate different duplication-loss scenarios. Then for each such scenario we prove that a random quartet from the gene tree is more likely to agree with the species tree quartet rather than any of the two other quartets.

This result provides a theoretical justification to the findings in~\cite{Du:2019mlocus,Zhang:2019astral-pro} which showcased the accuracy of ASTRAL in presence of duplications, losses, and incomplete lineage sorting.

%One of the most prominent such models is the unified duplication-loss-coalescence (DLCoal) model of gene tree evolution by Rasmussen and Kellis~\cite{Rasmussen:2012unified}. DLCoal simultaneously accounts for three crucial types of evolutionary events that shape gene evolution. Namely, duplications, losses, and deep coalescence (also known as incomplete lineage sorting) events. The DLCoal process involves two steps, (i) a birth/death process within the branches of the species tree creates a \emph{locus tree} and (ii) a bounded multispecies coalescence process acting on the locus tree generates the observed \emph{gene tree}.

\section{Preliminaries}
We denote a rooted \emph{(phylogenetic) tree} by $(T, \omega)$. Here $T$ is the \emph{tree topology} and is a binary rooted tree with the designated root vertex, $\rho(T)$, of degree 2, all internal nodes of degree three, and with leaves that are bijectively labeled by elements of set $\Le(T)$. For convenience, we identify leaves with their labels. Further, tree topologies are \emph{planted} implying that an additional \emph{root edge} is attached to the root vertex.
Then, $\omega$ specifies the lengths of edges in $T$ in \emph{coalescent units} (i.e., the number of generations normalized by the effective population size~\cite{Allman:2011unrooted}). More formally, $\omega : E(T) \to \Q^{+}$. In particular, we assume that all edge lengths are strictly positive.

%Two $X$-trees are identical if there exists a label-preserving graph isomorphism between them.
 %defined as the binary rooted tree such that the minimal connected subgraph of $T$ which contains all leaves from $Y$ is a subdivision of $T|_Y$. %For convenience, we define the \emph{size} of a tree as $|T| := |\Le(T)| = |X|$.

An \emph{unrooted (phylogenetic) tree topology} $T$ is similar to the rooted tree topology, but without a designated root and the root edge. That is, in unrooted tree $T$ all non-leaf vertices have degree 3.

We say that an edge $e$ is \emph{external} if it is incident with a leaf vertex, and otherwise we call $e$ \emph{internal}.
Further, given a set $Y \subset X$, tree topology $T|_Y$ is obtained from $T$ by restricting the leaf-set to $Y$.

A rooted topology $T$ defines a partial order on its nodes: given two nodes $x$ and $y$ we say $x \preceq y$ if $x$ is a descendant of $y$ (and $x \prec y$ if additionally $x \ne y$). %Further, we say that $x$ and $y$ are \emph{incomparable} if neither $x \preceq y$ nor $y \preceq x$. 
For a set $Z \subseteq X$ the \emph{least common ancestor (lca)} of $Z$, denoted $\lca_T(Z)$, is the lowest node $v$ such that each $l \in Z$ is a descendant of $v$.

\begin{figure}
	\centering
	\begin{tikzpicture}[scale=0.65]
	\tikzstyle{vertex} = [circle,draw,fill,inner sep=0pt, minimum size=4pt]
	\tikzstyle{edge} = [draw,ultra thick,-]
	\tikzstyle{hedge} = [draw,ultra thick,-,color=blue]
	\tikzset{vlabel/.style 2 args={#1=1pt of #2}}

	\def\offset{0}
	\foreach \x/\y/\name in {{0/0/v1},{1/0/v2},{-0.7/-0.7/l2},{-0.7/0.7/l1},{1.7/0.7/l3},{1.7/-0.7/l4}} {
		\coordinate [] (\name) at (\offset + \x,\y) {};
	}
	\foreach \v/\u/\param in {{v1/v2/edge},{v1/l2/edge},{v1/l1/edge},{v2/l3/edge},{v2/l4/edge}} {
		\path [\param] (\v) -- (\u);
	}
	\foreach \name/\label/\pos in {l1/a/left,l2/b/left,l3/c/right,l4/d/right} {
		\node [vertex] at (\name) {};
		\node [vlabel={\pos}{\name}] {\label};
	}
	%\node[] at (\offset+0.5,1.5) {(1)};
	
	\def\offset{6}
	\foreach \x/\y/\name in {{0/0/v1},{1/0/v2},{-0.7/-0.7/l2},{-0.7/0.7/l1},{1.7/0.7/l3},{1.7/-0.7/l4}} {
		\coordinate [] (\name) at (\offset + \x,\y) {};
	}
	\foreach \v/\u/\param in {{v1/v2/edge},{v1/l2/edge},{v1/l1/edge},{v2/l3/edge},{v2/l4/edge}} {
		\path [\param] (\v) -- (\u);
	}
	\foreach \name/\label/\pos in {l1/a/left,l2/c/left,l3/b/right,l4/d/right} {
		\node [vertex] at (\name) {};
		\node [vlabel={\pos}{\name}] {\label};
	}
	%\node[] at (\offset+0.5,1.5) {(2)};

	\def\offset{12}
	\foreach \x/\y/\name in {{0/0/v1},{1/0/v2},{-0.7/-0.7/l2},{-0.7/0.7/l1},{1.7/0.7/l3},{1.7/-0.7/l4}} {
		\coordinate [] (\name) at (\offset + \x,\y) {};
	}
	\foreach \v/\u/\param in {{v1/v2/edge},{v1/l2/edge},{v1/l1/edge},{v2/l3/edge},{v2/l4/edge}} {
		\path [\param] (\v) -- (\u);
	}
	\foreach \name/\label/\pos in {l1/a/left,l2/d/left,l3/b/right,l4/c/right} {
		\node [vertex] at (\name) {};
		\node [vlabel={\pos}{\name}] {\label};
	}
	
\end{tikzpicture}
	\caption{All three possible quartets on $a,b,c,d$ leaves.}
	\label{fig:quartets}
\end{figure}

\paragraph{Quartets.} A quartet is an unrooted tree topology with exactly four leaves. Assuming that the leaves are $a, b, c,$ and $d$, we denote the quartets in Figure~\ref{fig:quartets}(left), \ref{fig:quartets}(middle) and  \ref{fig:quartets}(right) as $ab|cd$, $ac|bd$, and $ad|bc$ respectively (based on what pairs of leaves does the internal edge separate).

\subsection{Unified DLCoal model}\label{sec:dlc}
We now review the unified duplication-loss-coalescence (DLCoal) model~\cite{Rasmussen:2012unified}.

\begin{figure}[!t]
	\centering
	\begin{tikzpicture}[scale=0.27]
	\tikzstyle{vertex} = [circle,draw,fill,inner sep=0pt, minimum size=3pt]
	\tikzstyle{edge} = [draw,line width=2pt,-]
	\tikzset{vlabel/.style 2 args={#1=1pt of #2}}
	\tikzset{cross/.style={cross out, draw=black, minimum size=2*(#1-\pgflinewidth), inner sep=0pt, outer sep=0pt},cross/.default={1pt}}
	\def\step{0.8}
	\def\offset{36}	
	
	\foreach \x/\name\label in {{0/A/$A$},{3/B/$B$},{6/C/$C$}} {
		\node [vertex] (\name) at (\offset+\x,0) {};
		\node [vlabel={below}{\name}] {\label};
	}
	\foreach \x/\y/\name in {{3/12/spr},{3/8/sr},{4.5/4/sbc}} {
		\coordinate [] (\name) at (\offset+\x,\y) {};
	}
	\foreach \u/\v in {spr/sr,sr/sbc,sr/A,sbc/B,sbc/C} {
		\path [edge] (\u) -- (\v);
	}
	\node [vlabel={above}{spr}] {$\mathbf{S}$}; 
	
	%------------------
	\def\offset{18}	
	\foreach \x/\name\label in {{0/l1a/$a_1$},{3/l1b/$b_1$},{6/l1c/$c_1$},8/l2a/$a_2$,11/l2c/$c_2$,13/l3c/$c_3$} {
		\node [vertex] (\name) at (\offset+\x,0) {};
		\node [vlabel={below}{\name}] {\label};
	}
	\foreach \x/\y/\name in {{3/12/lpr},{3/8/l1r},{4.5/4/l1bc},9.5/9/l2pr,9.5/8/l2r,10.25/4/l2bc,13/2/l3r} {
		\coordinate [] (\name) at (\offset+\x,\y) {};
	}
	\foreach \u/\v in {lpr/l1r,l1r/l1bc,l1r/l1a,l1bc/l1b,l1bc/l1c} {
		\path [edge] (\u) -- (\v);
	}
	\foreach \u/\v in {l2pr/l2r,l2r/l2a,l2r/l2c} {
		\path [edge,color=green] (\u) -- (\v);
	}

	\foreach \u/\v in {l3r/l3c} {
		\path [edge,color=cyan] (\u) -- (\v);
	}
	\coordinate [] (ol2) at (\offset+3,9.3) {};
	\coordinate [] (ol3) at (\offset+11 - 0.3*1.5,0.3*8) {};
	\path[edge,color=cyan,->,bend left=20] (ol3) to [out=50,in=130] (l3r); 
	\path[edge,color=green,->,bend left=20] (ol2) to [out=20,in=160] (l2pr); 
	
	\coordinate[below left=20pt and 8pt of l2bc] (l2b) {};
	\path[draw,thick,color=green] (l2bc) -- (l2b);
	\node[cross=4pt,purple,thick] at (l2b) {};
	
	\node [vlabel={above}{lpr}] {$\mathbf{L}$}; 
	
	%------------------
	\def\offset{0}
	\foreach \x/\name\label in {{0/g1a/$a_1$},{3/g1b/$b_1$},{6/g1c/$c_1$},8/g2a/$a_2$,11/g2c/$c_2$,13/g3c/$c_3$} {
		\node [vertex] (\name) at (\offset+\x,0) {};
		\node [vlabel={below}{\name}] {\label};
	}
	\foreach \x/\y/\name in {{6/12/gpr},{6/11/gr},{3.5/10/g1r},1.5/8.5/g1ab,9.5/8.5/g2r,13/2/g3r} {
		\coordinate [] (\name) at (\offset+\x,\y) {};
	}
	\coordinate [] (g3r) at (\offset+11-0.45*1.5,0.45*8) {};
	\path[] (gr) -- (g2r) node [pos=0.6,inner sep=0] (g2pr) {};
	\path[] (g3r) -- (g3c) node [pos=0.4,inner sep=0pt] (g3pr) {};
	\foreach \u/\v in {gpr/gr,gr/g1r,g1r/g1c,g1r/g1ab,g1ab/g1a,g1ab/g1b,gr/g2pr} {
		\path [edge] (\u) -- (\v);
	}
	\foreach \u/\v in {g2pr/g2r,g2r/g2a,g2r/g2c,g3r/g3pr} {
		\path [edge,color=green] (\u) -- (\v);
	}
	
	\path[edge,color=cyan] (g3pr) -- (g3c); 
	
	\draw [fill,color=red] (g2pr) circle (9pt);
	\draw [fill,color=red] (g3pr) circle (9pt);
	
	\node [vlabel={above}{gpr}] {$\mathbf{G}$}; 
	
	\node [single arrow, fill=blue!20,shape border rotate=180, above=3.4cm of l3c, minimum width=35pt] {\small{Dup+Losses}};
	
	\node [single arrow, fill=blue!20,shape border rotate=180, above right=3.1cm and 0.02cm of g3c,minimum width=35pt,minimum height=60pt] {\small{b-MSC}};
	
%	\node[draw,star,star points=7,above right=0pt and 6pt of gr,inner sep=2.5pt,ultra thick,fill=yellow] {};
%	\node[draw,star,star points=7,above right=0pt and 6pt of g3r,inner sep=2.5pt,ultra thick,fill=yellow] {};
		
\end{tikzpicture}
	\caption{An example of a gene tree $G$, locus tree $L$, and species tree $S$. Note that the arrows in the locus tree represent the duplication events, and the cross represents a loss event. Further, the red circles on the gene tree represent the duplication points.}
	\label{fig:dlc-example}
\end{figure}

\paragraph{Species tree.}
A species tree $(T_S, \omega_S)$ represents an evolutionary history of species. Leaves of $T_S$ are labeled by the extant species names.

\paragraph{Locus tree.} A locus tree $(T_L, \omega_L)$ represents a duplication/loss history of a fixed gene. A locus tree is obtained from the species tree by running the duplication-loss process~\cite{Rasmussen:2012unified,Legried:2019dl} on top of it. More specifically, The process starts in the root edge of the species tree with exactly one locus. Then, developing along the branches of the species tree, every locus (independently) has an exponential probability of being duplicated (i.e., a new locus is created) or being lost. See Figure~\ref{fig:dlc-example} for an example.

Locus tree leaves are labeled by gene names.

\paragraph{Gene tree.} A gene tree $(T_G, \omega_G)$ represents a gene evolutionary history. The gene tree is obtained from the locus tree by running the \emph{bounded multispecies coalescent (b-MSC)} process on top of it~\cite{Rasmussen:2012unified} (see Section~\ref{sec:c-ils} for a more detailed description of that process). Figure~\ref{fig:dlc-example} provides an example of that process.

%\subsubsection{DL process}

\subsection{Multispecies coalescent (MSC) model}\label{sec:ils}
In the standard multispecies coalescent model~\cite{Rannala:2003MSC} gene lineages are followed backwards in time (from the leaves to the root).

For simplicity, we assume that there is exactly one gene lineage starting in every extant locus tree leaf. If two or more lineages enter the same locus tree edge, then an exponential distribution defines a probability of a specific pattern of coalescence of those lineages.

In particular, for any two lineages $a,b$ that entered the same edge the probability that they coalesce within time $x$ (specified in terms of coalescent units) is as follows:
\[
P[a,b \text{ coalesced within time } x] = 1 - e^{-x}.
\]
%Here we adopt a notation of the form $T_x$ that 

More generally, we denote the probability that $i$ lineages coalesce into $j$ lineages within time $x$ ($j \le i$) by 
$
g_{i,j}(x).
$ 
This value can be computed using the following formula~\cite{Allman:2011unrooted}:
\[
g_{i,j}(x) = \sum_{k = j}^{i}\exp\left(-{k \choose 2}x\right)\frac{(2k-1)(-1)^{k-j}}{j!(k-j)!(j+k-1)}\prod_{m=0}^{k-1}\frac{(j+m)(i-m)}{i+m}.
\]

Further, it is important to note that if more than 2 lineages enter the same edge, then there is an equal probability for any pair of the lineages to coalesce first (i.e., the process is symmetric).

\subsection{Bounded MSC (b-MSC) model}\label{sec:c-ils}
The constraints on MSC in the unified DLCoal model appear due to the duplication points. In particular, assume that a duplication occurred at time point $d$ in time (counting backwards from leaves). Further, let $a$ and $b$ be locus leaves that are located below that duplication (i.e., $a$ and $b$ exist as a consequence of that duplication). Then we know that lineages $a$ and $b$ must coalesce prior to time point $d$. Therefore, the probability that any two lineages $a,b$, which entered the same edge below a duplication point $p$, coalesce within time $x$ is as follows:
\[
P[a,b \text{ coalesced within time } x \mid a,b \text{ coalesced prior to } p] = \frac{1 - e^{x}}{P[a,b \text{ coalesced prior to } p]}.
\]

Where $P[a,b \text{ coalesced prior to } p]$ is determined by the original, unbounded MSC model.

\section{Quartet probabilities under b-MSC}\label{sec:probs}
To obtain our main result we need to compute the probabilities of each quartet appearing in the gene tree tree based on a fixed locus tree topology. Note that Allman et al.~\cite{Allman:2011unrooted} explicitly computed these probabilities in the case of \emph{unconstrained MSC}. In our case we need to incorporate cases, when duplications (locus creation events) appear along the edges of the locus tree.

\noindent\textbf{Remark:} From now on, for convenience, we only focus on trees with exactly four leaves.

\smallskip
Without loss of generality assume that the locus tree $L$ displays the quartet $\qab$. Then there are two cases: either (i) $L$ is a balanced rooted tree or (ii) $L$ is a caterpillar tree. We now explore both those cases.

\subsection{$L$ is balanced}\label{sec:prob-balanced}

\begin{figure}[!h]
	\centering
	\begin{tikzpicture}[scale=0.55]
	\def\offset{0}
	\input{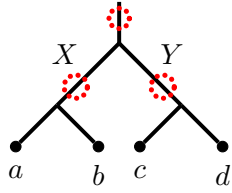}
	\path[] (x) -- (r) node [midway, above left] {$X$} node [pos=0.3] (mx) {};
	\path[] (y) -- (r) node [midway, above right] {$Y$} node [pos=0.3] (my) {};
	\path[] (r) -- (pr) node [pos=0.6] (mr) {};
	
	\draw [line width=2pt, line cap=round, dash pattern=on 0pt off 1.5*\pgflinewidth,color=red] (mx) circle (8pt);
	\draw [line width=2pt, line cap=round, dash pattern=on 0pt off 1.5*\pgflinewidth,color=red] (my) circle (8pt);
	\draw [line width=2pt, line cap=round, dash pattern=on 0pt off 1.5*\pgflinewidth,color=red] (mr) circle (7pt);
\end{tikzpicture}
	\caption{The balanced quartet representing the locus tree and displaying quartet $\qab$. The dotted circles indicate potential duplication locations that can affect gene tree probabilities.}
	\label{fig:q-balanced}
\end{figure}

%For convenience, we set $x := \omega_L(X), y := \omega_L(Y)$ to be the lengths of edges $X$ and $Y$ respectively.

\subsubsection{Duplications along the $X$ or $Y$ edges.}\label{sec:balanced-XYdup}
Assume that a duplication has occurred along the $X$ and/or $Y$ edge. Recall that a duplication point indicates that gene lineages below it in the locus tree must coalesce prior to the duplication (when looking backwards in time). Therefore, if there is a duplication along the $X$ edge, then lineages corresponding to genes $a$ and $b$ must coalesce on that edge. That is, the gene tree must display quartet $\qab$. Similarly, the same is true if a duplication is located on the $Y$ edge. Hence,

\begin{tabular}{cc}
\begin{minipage}{0.6\textwidth}
	\begin{gather*}
	P[\qab \in G] = 1,\\
	P[\qac \in G] = P[\qad \in G] = 0.
	\end{gather*}
\end{minipage}&
\begin{minipage}{0.3\textwidth}
	\begin{tikzpicture}[scale=0.5]
	\def\offset{0}
	\input{graphics/qbalanced-base}
	\path[] (x) -- (r) node [midway, above left] {$X$} node [pos=0.3] (mx) {};
	\path[] (y) -- (r) node [midway, above right] {$Y$} node [pos=0.3] (my) {};
	\path[] (r) -- (pr) node [pos=0.6] (mr) {};
	
	\draw [fill,color=red] (mx) circle (8pt);
	\draw [line width=2pt, line cap=round, dash pattern=on 0pt off 1.5*\pgflinewidth,color=red] (my) circle (8pt);
	\draw [line width=2pt, line cap=round, dash pattern=on 0pt off 1.5*\pgflinewidth,color=red] (mr) circle (7pt);
\end{tikzpicture}
\end{minipage}
\end{tabular}

\subsubsection{General case}
We now demonstrate that for balanced $L$ there always exists a duplication along either the $X$ or $Y$ edge. More formally, see Lemma~\ref{lem:balanced-lt}.
\begin{lemma}\label{lem:balanced-lt}
	Let $L$ be a balanced locus tree displaying a quartet $q$. Then $P[q \in G \mid L] = 1$.
\end{lemma}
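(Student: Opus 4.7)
The plan is to reduce the lemma to the special case already treated in Section~\ref{sec:balanced-XYdup}. Concretely, I would prove the informal structural statement made just above the lemma: every balanced locus tree $L$ displaying $q=\qab$ must carry a duplication event on either its $X$ edge or its $Y$ edge. Once this is in hand, Section~\ref{sec:balanced-XYdup} finishes the job immediately, because a duplication on $X$ forces the lineages $a,b$ to coalesce on $X$ and hence forces $q\in G$ (and symmetrically for $Y$), yielding $P[q\in G\mid L]=1$.

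To establish the structural claim, I would reconcile $L$ with the underlying species tree and classify each internal node of $L$ as a speciation or a duplication event. A speciation node of $L$ must map to a species-tree node whose two child species sets exactly partition the descendant species on each side of the node; this is a rigid constraint, whereas duplication nodes are free. I would then enumerate the possible event-type labelings of the three internal nodes of $L$ (the root and the two cherries) and show that in every case at least one of the cherry edges $X$, $Y$ absorbs a duplication, either because the cherry node itself is a duplication, or because the reconciliation forces a hidden duplication (whose sibling copy was subsequently lost) along that cherry edge.

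The main obstacle I expect is the configuration in which both cherry nodes are labeled as speciations and only the root of $L$ is labeled as a duplication. Here the duplication sits precisely at the top endpoint of both $X$ and $Y$ rather than strictly inside either edge. I would argue that, for the b-MSC analysis, this case is equivalent to a duplication placed immediately inside $X$ (or $Y$): the two sister loci emerging from the root duplication are exactly the cherry sub-trees $(a,b)$ and $(c,d)$, so the lineages $a,b$ are trapped in one locus and must coalesce before the root duplication, and likewise for $c,d$. With that observation, the case analysis closes and the reduction to Section~\ref{sec:balanced-XYdup} goes through.
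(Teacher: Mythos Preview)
Your plan and the paper's proof share the same skeleton: argue that one of the two internal edges $X,Y$ of the balanced $L$ must carry a duplication point, then invoke Section~\ref{sec:balanced-XYdup}. The paper, however, dispatches this in a single sentence. It observes that every vertex of the locus tree corresponds to a locus creation event; applied to the root $r$, this means one of the two child edges of $r$ is precisely the edge along which the daughter locus was created, i.e.\ that edge ($X$ or $Y$) carries a duplication. There is no reconciliation with the species tree and no enumeration over event-type labelings of the three internal nodes. What you flag as the ``main obstacle''---root labeled duplication, cherries labeled speciation---is not an obstacle at all in the paper's argument but exactly the generic configuration its one-line observation handles directly.

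Your longer route also contains a slip. In the branch of your enumeration where ``the cherry node itself is a duplication,'' the associated duplication point lies on an \emph{external} edge of $L$ (the edge leading to whichever of $a,b$ is the freshly created copy), not on $X$. By the Remark immediately following Lemma~\ref{lem:balanced-lt}, duplications on external edges do not constrain the coalescence process, so this case does not manufacture a duplication on $X$ in the sense required by Section~\ref{sec:balanced-XYdup}. The only internal node of $L$ whose event type actually matters for the lemma is the root, which is what the paper isolates.
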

\begin{proof}
	Let $r$ be the root of $L$. Note that every vertex in the locus tree corresponds to a locus creation event. This implies that one of the child edges of $r$ must have a duplication on it. Then by the above analysis in Section~\ref{sec:balanced-XYdup}, gene tree $G$ must display the same quartet $q$.
\end{proof}
\paragraph{Remark:} Note that potential duplications along the external edges do not affect the coalescence process.

\subsection{$L$ is a caterpillar}\label{sec:prob-cat}
%We now similarly explore the 4-leaf caterpillar topology and respective quartet probabilities.

\begin{figure}[!h]
	\centering
	\begin{tikzpicture}[scale=0.6]
	\def\offset{0}
	\input{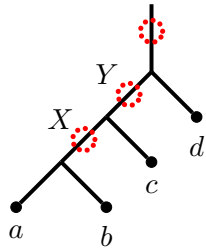}
	\path[] (x) -- (y) node [midway, above left] {$X$} node [pos=0.5] (mx) {};
	\path[] (y) -- (r) node [midway, above left] {$Y$} node [pos=0.5] (my) {};
	\path[] (r) -- (pr) node [pos=0.6] (mr) {};
	
	\draw [line width=2pt, line cap=round, dash pattern=on 0pt off 1.5*\pgflinewidth,color=red] (mx) circle (7pt);
	\draw [line width=2pt, line cap=round, dash pattern=on 0pt off 1.5*\pgflinewidth,color=red] (my) circle (7pt);
	\draw [line width=2pt, line cap=round, dash pattern=on 0pt off 1.5*\pgflinewidth,color=red] (mr) circle (7pt);
\end{tikzpicture}
	\caption{The caterpillar quartet representing the locus tree and displaying quartet $\qab$.}
	\label{fig:q-cat}
\end{figure}

For convenience, we set $x := \omega_L(X), y := \omega_L(Y)$ to be the lengths of edges $X$ and $Y$ respectively.
We now explore all possibilities of duplication placements on edges of $L$.

\subsubsection{No duplications (unbounded MSC).}\label{sec:cat-nodup} In this case the quartet probabilities are given by Allman et al.~\cite{Allman:2011unrooted}. In particular,

\begin{tabular}{cc}
\begin{minipage}{0.6\textwidth}
	\begin{gather*}
	P[\qab \in G] = 1 -\frac{2}{3}e^{-x},\\
	P[\qac \in G] = P[\qad \in G] = \frac{1}{3}e^{-x}.
	\end{gather*}
\end{minipage}&
\begin{minipage}{0.3\textwidth}
	\begin{tikzpicture}[scale=0.55]
	\def\offset{0}
	\tikzstyle{vertex} = [circle,draw,fill,inner sep=0pt, minimum size=4pt]
\tikzstyle{edge} = [draw,line width=1.5pt,-]
\tikzstyle{hedge} = [draw,ultra thick,-,color=blue]
\tikzset{vlabel/.style 2 args={#1=1pt of #2}}

\foreach \x/\y/\name/\label in {0/0/a/$a$,2/0/b/$b$,3/1/c/$c$,4/2/d/$d$} {
 	\node [vertex] (\name) at (\offset+\x,\y) {};
 	\node [vlabel={below}{\name}] {\label};
}

\foreach \x/\y/\name in {1/1/x,2/2/y,3/3/r,3/4.5/pr} {
	\coordinate (\name) at (\offset+\x,\y) {};
}

\foreach \u/\v/\label/\pos in {x/a//,x/b//,y/x/X/left,r/y/Y/right,pr/r//,y/c//,r/d//} {
	\path [edge] (\u) -- (\v);
}
	\path[] (x) -- (y) node [midway, above left] {$X$} node [pos=0.5] (mx) {};
	\path[] (y) -- (r) node [midway, above left] {$Y$} node [pos=0.5] (my) {};
	%\path[edge] (r) -- ($(r)+(0,2)$) node [pos=0.75] (mr) {};
\end{tikzpicture}
\end{minipage}
\end{tabular}

\subsubsection{$X$ edge duplication.}\label{sec:cat-Xdup}
Similarly to the balanced case, it is not difficult to see that

\begin{tabular}{cc}
	\begin{minipage}{0.6\textwidth}
		\begin{gather*}
		P[\qab \in G] = 1,\\
		P[\qac \in G] = P[\qad \in G] = 0.
		\end{gather*}
	\end{minipage}&
	\begin{minipage}{0.3\textwidth}
		\begin{tikzpicture}[scale=0.55]
	\def\offset{0}
	\input{graphics/qcat-base}
	\path[] (x) -- (y) node [midway, above left] {$X$} node [pos=0.5] (mx) {};
	\path[] (y) -- (r) node [midway, above left] {$Y$} node [pos=0.5] (my) {};
	\path[] (r) -- (pr) node [pos=0.6] (mr) {};
	
	\draw [fill,color=red] (mx) circle (7pt);
	\draw [line width=2pt, line cap=round, dash pattern=on 0pt off 1.5*\pgflinewidth,color=red] (my) circle (7pt);
	\draw [line width=2pt, line cap=round, dash pattern=on 0pt off 1.5*\pgflinewidth,color=red] (mr) circle (7pt);
\end{tikzpicture}
	\end{minipage}
\end{tabular}

\subsubsection{$Y$ edge duplication.}\label{sec:cat-Ydup}
Assume that a duplication occurred on the $Y$ edge as shown in the figure below and no duplications occurred along on the $X$ edge. Then

\noindent
\begin{minipage}{0.7\textwidth}
	\flushleft
	\begin{gather*}
	\begin{split}
	&P[\qab \in G] = P[\qab \in G \mid a,b,c \text{ coalesced before duplication}] \\
	&= 1 - P\left[
		\begin{split}
		&a,b \text{ did not coalesce on } X;\\
		&\qac \text{ or } \qad \text{ obtained during time } l\ \ 
		\end{split}
	\Bigg| a,b,c \text{ coalesced before dup.}\right]\\
	&= 1 - \frac{2}{3}e^{-x}\frac{g_{3,1}(l)}{P[a,b,c \text{ coalesced before duplication}]}
	\end{split}\\
	P[\qab \in G] = P[\qad \in G]
	= \frac{1}{3}e^{-x}\frac{g_{3,1}(l)}{P[a,b,c \text{ coalesced before duplication}]}
	\end{gather*}
\end{minipage}%
\begin{minipage}{0.3\textwidth}
	\flushright
	\begin{tikzpicture}[scale=0.6]
	\def\offset{0}
	\input{graphics/qcat-base}
	\path[] (x) -- (y) node [midway, above left] {$X$} node [pos=0.5] (mx) {};
	\path[] (y) -- (r) node [pos=0.75] (my) {};
	\path[] (r) -- (pr) node [pos=0.6] (mr) {};
	
	%\draw [fill,color=red] (mx) circle (7pt);
	\draw [fill,color=red] (my) circle (7pt);
	\draw [line width=2pt, line cap=round, dash pattern=on 0pt off 1.5*\pgflinewidth,color=red] (mr) circle (7pt);
	\draw [decorate,decoration={brace,amplitude=4pt},xshift=5pt,yshift=0pt,very thick,color=gray]
	(y) -- (my) node [midway, above left, xshift=2pt,color=black] {$l$};
\end{tikzpicture}
\end{minipage}

\subsubsection{Root edge duplication.}\label{sec:cat-rootdup}
Assume that a duplication occurred on the root edge as shown in the figure below and no duplications occurred along the $X$ and $Y$ edges.

\begin{center}
	\begin{tikzpicture}[scale=0.55]
	\def\offset{0}
	\input{graphics/qcat-base}
	\path[] (x) -- (y) node [midway, above left] {$X$} node [pos=0.5] (mx) {};
	\path[] (y) -- (r) node [midway, above left] {$Y$} node [pos=0.5] (my) {};
	\path[edge] (r) -- ($(r)+(0,2)$) node [pos=0.75] (mr) {};
	
	%\draw [fill,color=red] (mx) circle (8pt);
%	\draw [line width=2pt, line cap=round, dash pattern=on 0pt off 1.5*\pgflinewidth,color=red] (my) circle (8pt);
	\draw [fill,color=red] (mr) circle (7pt);
	\node[right, xshift=1pt] at (mr) {$z$};
	\draw [decorate,decoration={brace,amplitude=5pt,mirror},xshift=5pt,yshift=0pt,very thick,color=gray]
	(r) -- (mr) node [midway, right, xshift=3pt,color=black] {$l$};
\end{tikzpicture}
\end{center}

We start with computing the probability of the $\qac$ quartet. %\todo{Specify $g_{i,j}(x)$ meaning earlier}
\[
\begin{split}
P[\qac \in G] &= \frac{P\left[
	\begin{split}
	&a,b \text{ did not coalesce on } X;\\
	&a,c \text{ coalesced on } Y \text{ first};\\
	&\text{remaining lineages coalesced before } z
	\end{split}
\right]+
P\left[
	\begin{split}
	&a,b \text{ did not coalesce on } X;\\
	&\text{no coalescence on } Y;\\
	&\qac \text{ obtained within time } l
	\end{split}
\right]
}
{P[a,b,c,d \text{ coalesced before duplication}]}\\
&= \frac{\frac{1}{3}e^{-x}
	\big(
	g_{3,2}(y)g_{3,1}(l) + g_{3,1}(y)g_{2,1}(l) + g_{3,3}(y)g_{4,1}(l)
	\big)
}
{P[a,b,c,d \text{ coalesced before duplication}]}.
\end{split}
\]
Further, by symmetry $P[\qac \in G] = P[\qad \in G]$. Therefore,

\[
P[\qab \in G] = 1 - \frac{2}{3}e^{-x}
\left(
	\frac{g_{3,2}(y)g_{3,1}(l) + g_{3,1}(y)g_{2,1}(l) + g_{3,3}(y)g_{4,1}(l)}
	{P[a,b,c,d \text{ coalesced before duplication}]}
\right).
\]

\subsection{Core observations}
It is not difficult to see from the above derivations that for a fixed locus tree topology that displays $\qab$ (balanced or caterpillar), if one increases the length of edge $X$ then the probability $P[\qab \in G]$ grows.
More formally, see Observation~\ref{obs:grow}.
\begin{observ}\label{obs:grow} $ $\\ 
	\begin{enumerate}
		\item[(i)] Let $L_1$ and $L_2$ be two \emph{balanced} trees displaying $\qab$ (as in Fig.~\ref{fig:q-balanced}), with identical locations of duplication nodes and with $\omega_{L_1}(X) > \omega_{L_2}(X)$ and $\omega_{L_1}(Y) \ge \omega_{L_2}(Y)$. Then
		\[
		P[\qab \in G \mid L_1] > P[\qab \in G \mid L_2].
		\]
		\item[(ii)] Let $L_1$ and $L_2$ be two \emph{caterpillar} trees displaying $\qab$ (as in Fig.~\ref{fig:q-cat}), with identical locations of duplication nodes and with $\omega_{L_1}(X) > \omega_{L_2}(X)$ and $\omega_{L_1}(Y) \ge \omega_{L_2}(Y)$. Then
		\[
		P[\qab \in G \mid L_1] > P[\qab \in G \mid L_2].
		\]
	\end{enumerate}
\end{observ}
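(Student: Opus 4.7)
My plan is to prove Observation~\ref{obs:grow} by a direct case analysis on which of the internal edges $X$, $Y$, and the root edge carry duplications, substituting the closed-form probability expressions derived in Sections~\ref{sec:prob-balanced}--\ref{sec:prob-cat}. External-edge duplications do not affect the b-MSC process and can be ignored throughout.

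For part (i), the claim will follow immediately from Lemma~\ref{lem:balanced-lt}: for any balanced $L$ displaying $\qab$ we have $P[\qab\in G\mid L]=1$, so both $P[\qab\in G\mid L_1]$ and $P[\qab\in G\mid L_2]$ equal $1$ and the inequality holds (necessarily as equality in this degenerate regime). For part (ii) I will first dispose of the sub-case in which some duplication sits on edge $X$: Section~\ref{sec:cat-Xdup} then gives $P[\qab\in G\mid L]=1$ identically. Otherwise the duplications are confined to $Y$ and/or the root edge, and I will split into (a) no internal-edge duplication (Section~\ref{sec:cat-nodup}), (b) a duplication on $Y$, with any root-edge duplication being rendered inert once the $Y$ one forces full coalescence below it (Section~\ref{sec:cat-Ydup}), and (c) a duplication on the root edge only (Section~\ref{sec:cat-rootdup}). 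The key algebraic step is to rewrite each formula in the unified shape
\[
P[\qab \in G \mid L] \;=\; 1 - \frac{2/3}{1 + (e^x - 1)\,\Psi},
\]
where $\Psi>0$ depends on $y$ and the duplication positions but not on $x$: $\Psi$ is $+\infty$ in (a), equals $g_{2,1}(l)/g_{3,1}(l)$ in (b), and equals $B(y)/A(y)$ with $A,B$ as in Section~\ref{sec:cat-rootdup} in (c). Strict monotonicity in $x$ follows instantly, since $(e^x-1)\Psi$ is strictly increasing in $x$, so the denominator grows and $P[\qab\in G\mid L]$ strictly increases; this already settles the observation whenever $\omega_{L_1}(Y)=\omega_{L_2}(Y)$, and also in sub-cases (a) and (b) for any $\omega_{L_1}(Y)\ge\omega_{L_2}(Y)$ because $\Psi$ is there independent of $y$.

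The main obstacle I expect is sub-case (c), where $\Psi=B(y)/A(y)$ genuinely depends on $y$ and so the weak hypothesis $\omega_{L_1}(Y)\ge\omega_{L_2}(Y)$ must be analyzed carefully. My plan here is to run the b-MSC processes on $L_1$ and $L_2$ under a common coupling that shares Poisson coalescence clocks on the lower portion of the $Y$ edges, and to combine the resulting control of $\Psi$ with the strict monotonicity $g_{2,1}(l)>g_{3,1}(l)>g_{4,1}(l)$ of the Tavar\'e coalescence probabilities; the strict gain coming from $\omega_{L_1}(X)>\omega_{L_2}(X)$ then needs to be shown to dominate any residual $y$-dependence in order to close the argument.
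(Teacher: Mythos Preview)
Your plan mirrors the paper's own treatment, which is nothing more than the sentence ``It is not difficult to see from the above derivations that \ldots\ if one increases the length of edge $X$ then the probability $P[\qab\in G]$ grows''; the paper offers no further argument, so your case-by-case inspection of the formulas from Sections~\ref{sec:prob-balanced}--\ref{sec:prob-cat} is exactly the intended route.

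Two concrete corrections. First, in your unified form $P=1-\tfrac{2/3}{1+(e^x-1)\Psi}$, sub-case~(a) (no internal duplications) has $\Psi=1$, not $\Psi=+\infty$: the unconstrained probability $1-\tfrac{2}{3}e^{-x}$ rewrites as $1-\tfrac{2/3}{e^x}=1-\tfrac{2/3}{1+(e^x-1)\cdot 1}$. Second, you are right that the \emph{strict} inequality fails in the degenerate regimes (balanced tree; caterpillar with an $X$-duplication), where $P[\qab\in G\mid L]\equiv 1$; the observation is simply overstated there, and only $\ge$ survives. This is harmless: the paper invokes Observation~\ref{obs:grow} only in the proof of Lemma~\ref{lem:3lin}, where a weak inequality suffices.

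Regarding the obstacle you flag in sub-case~(c): your worry is legitimate, and the paper does not address it either. But notice that the only application (Lemma~\ref{lem:3lin}) compares locus trees with \emph{identical} $Y$-edges and duplication pattern, the sole difference being $\omega_{L_1}(X)=x'+x_{ab}>x'=\omega_{L_2}(X)$. Hence you never need monotonicity in $y$: it is enough to establish strict monotonicity in $x$ at fixed $y$ and fixed duplication data, which your $\Psi$-form gives immediately since $\Psi>0$. You can drop the coupling argument and simply restrict the statement (or its proof) to $\omega_{L_1}(Y)=\omega_{L_2}(Y)$; that is all the paper actually uses.
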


Further, from the above derivations we observe the following.
\begin{lemma}\label{lem:greater}
	For a locus tree $L$ that displays $\qab$ (regardless of duplication locations) we have $P[\qab \in G \mid L] > P[\qac \in G \mid L] = P[\qad \in G \mid L]$.
\end{lemma}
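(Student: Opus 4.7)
The plan is to carry out a case analysis following the classification of locus-tree shapes and duplication placements from Sections~\ref{sec:prob-balanced} and~\ref{sec:prob-cat}, verifying the required inequality in each case using the explicit formulas derived there.

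If $L$ is balanced, Lemma~\ref{lem:balanced-lt} gives $P[\qab \in G \mid L] = 1$, which forces $P[\qac \in G \mid L] = P[\qad \in G \mid L] = 0$ (since the gene tree displays exactly one quartet on $\{a,b,c,d\}$), and the conclusion is immediate. If $L$ is a caterpillar, I would first reduce an arbitrary duplication pattern to one of the four patterns already treated in Sections~\ref{sec:cat-nodup}--\ref{sec:cat-rootdup}. The key observation for this reduction is that only the lowest duplication on the path from the $\{a,b\}$-cherry up to (and including) the root edge of $L$ is relevant: any strictly higher duplication constrains only a one- or two-lineage configuration after the ``controlling'' coalescence has already taken place, so it contributes the same multiplicative factor to each of $P[\qab \mid L]$, $P[\qac \mid L]$, and $P[\qad \mid L]$, and this factor cancels out in their comparison. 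Duplications on external edges do not affect coalescence at all, by the remark following Lemma~\ref{lem:balanced-lt}.

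In every subcase the equality $P[\qac \mid L] = P[\qad \mid L]$ is visible directly in the derived formulas, and reflects the symmetry of MSC: conditional on $a,b$ failing to coalesce on the $X$ edge, the three pairs in $\{a,b,c\}$ are equally likely to be the first to coalesce, and the identity of that first coalescing pair determines whether the resulting quartet is $\qab$, $\qac$, or $\qad$. To establish the strict inequality $P[\qab \mid L] > P[\qac \mid L]$ I would substitute the formulas from Sections~\ref{sec:cat-nodup}--\ref{sec:cat-rootdup} and simplify. In Section~\ref{sec:cat-nodup} the difference reduces to $1 - e^{-x}$; in Section~\ref{sec:cat-Xdup} it equals $1$; in Section~\ref{sec:cat-Ydup}, writing the normalizing denominator as $D = (1-e^{-x})\,g_{2,1}(l) + e^{-x} g_{3,1}(l)$ shows the difference equals $(1-e^{-x})\,g_{2,1}(l)/D$; and in Section~\ref{sec:cat-rootdup} an analogous decomposition of the denominator $D$ gives a difference proportional to $(1-e^{-x})\bigl(g_{2,2}(y)\,g_{3,1}(l) + g_{2,1}(y)\,g_{2,1}(l)\bigr)$. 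Since all edge lengths are strictly positive (Section~2) and the functions $g_{i,j}(\cdot)$ are strictly positive on positive arguments, each difference is strictly positive.

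The algebra itself is routine once the formulas of Section~\ref{sec:probs} are in hand; the step I expect to deserve the most careful write-up is the reduction to the four tabulated duplication patterns, i.e., the argument that only the lowest constraining duplication on the $ab$-to-root path affects the ratio of quartet probabilities. That step is what makes the four subcases of Sections~\ref{sec:cat-nodup}--\ref{sec:cat-rootdup} cover ``regardless of duplication locations'' in the statement of the lemma.
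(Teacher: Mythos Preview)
Your proposal is correct and follows essentially the same case analysis as the paper's proof. Two minor differences worth noting: the paper simply asserts that the only nontrivial cases are the caterpillar with lowest duplication on $Y$ or on the root edge, whereas you explicitly justify this reduction by arguing that any higher duplication contributes a common multiplicative factor to all three quartet probabilities; and in the nontrivial cases the paper bounds the denominator (e.g.\ $P[a,b,c\text{ coalesced before duplication}]\ge g_{3,1}(l)$ via $g_{k,1}(l)\ge g_{3,1}(l)$), while you instead decompose the denominator exactly and read off the difference $P[\qab]-P[\qac]$ directly. Both routes lead to the same strict positivity, and your reduction step is arguably a useful clarification of why the four tabulated duplication patterns suffice.
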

\begin{proof}
	Observe that this statement is not trivial only in two cases. (i) $L$ is a caterpillar and the lowest duplication is on $Y$ edge. (ii) $L$ is a caterpillar and the lowest duplication is at the root edge.
	
	In case (i) it is sufficient to show that $P[a,b,c \text{ coalesced before duplication}] \ge g_{3,1}(l)$. Indeed observe that
	\[
		P[a,b,c \text{ coalesced before duplication}] = \sum_{k = 2}^{3}g_{k,1}(l)P[k \text{ lineages entered edge} Y] \ge g_{3,1}(l).
	\]
	The inequality holds since $g_{k,1}(l) \ge g_{3,1}(l)$ for all $k \in \set{2,3}$.
	
	Note that the proof in the other case is similar.\todo{AM: double check case (ii), as it is most challenging.}
\end{proof}

\section{Consistency of ASTRAL-one}
We now prove that ASTRAL-one method is statistically consistent under the DLCoal model.

\begin{theorem}
	Let $S$ be a fixed species tree and let $\cG$ be a collection of gene trees that independently evolved within $S$ according to the DLCoal process. Then as the number of trees in $\cG$ goes to infinity, the probability that $\hat{S}$, the species tree estimate by ASTRAL-one, is equal to $S$ goes to 1.
\end{theorem}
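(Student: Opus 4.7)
The plan is to reduce the theorem to a local quartet inequality and verify it by a case analysis on duplication-loss scenarios. By the quartet-consistency framework used by Legried et al.\ for the GDL model, it suffices to prove that for every set of four species $\{A,B,C,D\}$ whose species-tree restriction displays $\qab$, the four-leaf gene tree $G$ obtained by running DLCoal and then uniformly sampling one surviving lineage per species satisfies $P[\qab \in G] > P[\qac \in G]$; the corresponding inequality against $\qad$ follows from the $c \leftrightarrow d$ automorphism of both the species quartet and the DLCoal process restricted to $S|_{ABCD}$, so only one of the two alternatives needs to be handled.

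The first step is to condition on the four-leaf restriction $L_0$ of the sampled locus tree, equipped with its inherited edge lengths and duplication annotations. Section~\ref{sec:probs} supplies the b-MSC gene-quartet probabilities in closed form for every such $L_0$, and Lemma~\ref{lem:greater} gives the pointwise inequality $P[q \in G \mid L_0 \text{ displays } q] > P[q' \in G \mid L_0 \text{ displays } q]$ for $q' \neq q$. After averaging over the DLCoal-induced distribution of $L_0$, what remains is to show that this distribution, composed with the b-MSC step, still tilts in favour of $\qab$.

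Following the scenario-pairing idea of Legried et al., I will partition the sample space of duplication-loss histories together with representative choices and construct, for each history $\sigma$ whose $L_0$ displays $\qac$, a paired history $\sigma'$ whose $L_0$ displays $\qab$ and whose joint density dominates that of $\sigma$. The pairing exploits the fact that $S|_{ABCD}$ itself displays $\qab$: a local swap at the most recent common ancestor of the $c$- and $d$-representatives transforms a discordant history into a concordant one of greater likelihood, while Observation~\ref{obs:grow} (monotonicity of the b-MSC quartet probability in the internal edge length) ensures that the concordant history also dominates at the b-MSC stage. Combining this with the conditional inequality from the previous step produces $P[\qab \in G] > P[\qac \in G]$.

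The main obstacle will be the pairing itself. Duplication-loss histories interact nontrivially with the sampling step, so the pairing must treat several cases in turn --- duplications above the $ab|cd$ split, on the $cd$-branch, and below it; losses that eliminate candidate representatives and force resampling; and pairs whose duplication events land on different locus-tree edges. In each case one must verify not merely a weak density inequality but that a positive-measure subset of paired scenarios yields a strict inequality, since Lemma~\ref{lem:greater} is strict only when $L_0$ displays $\qab$. Keeping track of the degenerate subcases, where the restricted locus tree collapses or where a representative must be drawn from a sister clade, is where I expect the bulk of the technical bookkeeping to sit.
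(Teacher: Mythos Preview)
Your reduction to the four-taxon inequality is the right first move and matches the paper. One small slip: the symmetry giving $P[\qac \in G] = P[\qad \in G]$ is not the $c \leftrightarrow d$ swap---when $S|_{ABCD}$ is the caterpillar $(((A,B),C),D)$ that swap is \emph{not} an automorphism of the rooted species tree, and DLCoal sees the rooted tree with branch lengths, not the unrooted quartet. The correct symmetry is $a \leftrightarrow b$, which fixes $\qab$ and exchanges $\qac$ with $\qad$ in both shapes.

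The heart of your plan, however, has a genuine gap. You propose to pair each DLCoal history $\sigma$ whose sampled locus quartet $L_0$ displays $\qac$ with a history $\sigma'$ of at least equal density displaying $\qab$, via ``a local swap at the most recent common ancestor of the $c$- and $d$-representatives.'' But the locus tree is not a free object: it is produced by a birth--death process running \emph{inside the branches of} $S$, and the sampled $a,b,c,d$ are forced to sit in prescribed species. A swap that moves or relabels a $c$-lineage relative to a $d$-lineage will in general fail to yield a valid embedding in $S$, and even when it does there is no reason it should be injective or density-dominating. You acknowledge that ``the main obstacle will be the pairing itself,'' but what follows is a list of subcases to worry about, not a construction; no concrete swap is defined in any case, and Observation~\ref{obs:grow} only compares two locus trees displaying the \emph{same} quartet, so it does not by itself bridge an $\qac$-history to an $\qab$-history. (Nor is this the mechanism in Legried et al., so the attribution is misleading.)

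The paper sidesteps a history-level bijection entirely by conditioning more coarsely. It fixes the number $l$ of locus lineages crossing a chosen speciation edge (the root edge for balanced $S$, the edge above $\lca(A,B,C)$ for the caterpillar) and partitions the sample space by which of these $l$ root lineages each sampled gene descends from: scenarios written $(a,b,c,d)$, $(ab,cd)$, $(ab,c,d)$, $(abc,d)$, and so on. Because the birth--death process above this edge is exchangeable in the lineage labels and decouples from what happens below, each scenario class becomes tractable: $(a,b,c,d)$ gives equal quartet probabilities by symmetry; $(ab,cd)$ versus $(ac,bd)$ is handled by a Cauchy--Schwarz inequality on the per-lineage descendant counts (this is where the Legried et al.\ idea actually enters) combined with Lemma~\ref{lem:balanced-lt}; $(ab,c,d)$ versus $(ac,b,d)$ is compared by fixing the common root-edge scenario $\cL_r$ and invoking Observation~\ref{obs:grow} on the two resulting restricted locus trees, whose internal edges differ by exactly the extra segment $x_{ab}$; and the classes where three or four genes share a root lineage force $L_0$ to display $\qab$, so Lemma~\ref{lem:greater} applies directly and supplies the strict inequality. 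Your finer conditioning on the realised $L_0$ is exactly what makes a clean pairing hard; the paper's coarser root-lineage partition is the missing idea.
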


For this result it is sufficient (see~\cite{Legried:2019dl}) to prove the following:

\begin{theorem}\label{thm:main}
	Let $S$ be a species tree with four leaves that displays quartet $AB|CD$, and let $G$ be a gene tree that evolved in $S$ according to the DLCoal process. If one picks genes $a, b, c, d$ (that correspond to species $A, B, C, D$ respectively) uniformly at random (assuming they exist) from $G$, then $P[\qab \in G] > P[\qac \in G] = P[\qad \in G]$.
\end{theorem}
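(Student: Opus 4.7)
The plan is to apply the tower property to separate the two layers of the DLCoal model. Let $L' := L|_{\{a, b, c, d\}}$ denote the 4-leaf restricted locus tree, including its edge lengths and duplication placements. Then
\[
P[q \in G] = \mathbb{E}\bigl[\,P[q \in G \mid L']\,\bigr],
\]
where the expectation is taken jointly over the DL process and the uniform choice of representatives $(a, b, c, d)$ given they exist. Section~\ref{sec:probs} determines $P[q \in G \mid L']$ explicitly in terms of $L'$, while Lemma~\ref{lem:greater} gives the sign of the per-$L'$ difference: $P[\qab \mid L'] - P[\qac \mid L']$ is strictly positive if $L'$ displays $\qab$, strictly negative if $L'$ displays $\qac$, and zero if $L'$ displays $\qad$ (the last from the equality part of Lemma~\ref{lem:greater}). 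The equality $P[\qac \in G] = P[\qad \in G]$ is immediate from the $C \leftrightarrow D$ involution of $S$, which induces a measure-preserving bijection on the joint process swapping $\qac$ and $\qad$.

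For the strict inequality $P[\qab \in G] > P[\qac \in G]$, the idea is to adapt the DL-scenario-separation strategy of Legried et al.~\cite{Legried:2019dl}. Let $v^* := \lca_L(a, b, c, d)$; since $\lca_S(A, B, C, D) = \rt(S)$, the reconciliation image of $v^*$ lies on the root edge of $S$ at or above $\rt(S)$, so $v^*$ is either a speciation at $\rt(S)$ or a duplication strictly above. In the speciation case, the two children of $v^*$ in $L$ separate $AB$-descendants from $CD$-descendants, so $L'$ is forced to display $\qab$; this case alone contributes strictly positively to $P[\qab \in G] - P[\qac \in G]$, since the no-duplication-at-all event (where $L = S$) has positive probability and lies within it. In the duplication case, the two child subtrees of $v^*$ are i.i.d.\ copies of the DL process on $S$ and one recurses on them. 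The essential structural fact to exploit at each level is that within a single DL-generated subtree the $AB$-side and $CD$-side evolve independently after the $\rt(S)$ speciation, which biases the partition of $\{a, b, c, d\}$ between the two children of $v^*$ toward grouping $\{a, b\}$ together and $\{c, d\}$ together.

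The main obstacle is making the biasing argument precise under the uniform leaf selection, which couples the species-counts across children through the normalizer. Writing $N_X^j$ for the count of species $X$ in child $j \in \{1, 2\}$ and $N_X = N_X^1 + N_X^2$, the prototypical dominance one must establish is
\[
\mathbb{E}\!\left[\frac{N_A^1 N_B^1 N_C^2 N_D^2}{N_A\, N_B\, N_C\, N_D}\right] \;\ge\; \mathbb{E}\!\left[\frac{N_A^1 N_B^2 N_C^1 N_D^2}{N_A\, N_B\, N_C\, N_D}\right],
\]
with analogous inequalities for unbalanced 1+3 splits and for deeper levels of the recursion, all while conditioning on every $N_X \ge 1$. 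The denominator prevents a direct factorization, so closing the inequality will likely require an explicit pairing: for every realization producing an $\qac$-partition, one constructs a realization of at least equal probability producing a $\qab$-partition by swapping species labels within one of the children. Combined with the strictly positive contribution from the speciation case and with Lemma~\ref{lem:greater}'s amplifying effect in the coalescent layer, this would yield $P[\qab \in G] > P[\qac \in G]$.
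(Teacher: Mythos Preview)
Your overall architecture---tower over the restricted locus tree $L'$, then control the sign via Lemma~\ref{lem:greater}---is sound, and the $C\leftrightarrow D$ symmetry for the equality is exactly right. But the proposal has a genuine gap at the point you yourself flag as the ``main obstacle'': you never establish the dominance inequality, and the suggested label-swap pairing is not obviously measure-preserving once you condition on all $N_X\ge 1$ and on the full DL realization (the two child subtrees of $v^*$ need not be exchangeable after that conditioning). More seriously, even granting the DL-layer inequality, your combination step is incomplete. Lemma~\ref{lem:greater} only gives the \emph{sign} of $P[\qab\mid L']-P[\qac\mid L']$, not its magnitude; to conclude that the expectation is positive you must match each $\qac$-displaying $L'$ with an $\qab$-displaying $L''$ of at least the same probability \emph{and} at least the same gap $P[\qab\mid L'']-P[\qac\mid L'']\ge P[\qac\mid L']-P[\qab\mid L']$. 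Nothing in your outline controls the gap, and the vague appeal to an ``amplifying effect'' does not supply this.

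The paper closes exactly this hole, but by a different decomposition. Rather than recursing on the random $\lca_L(a,b,c,d)$, it conditions on the deterministic number $l$ of locus lineages crossing a fixed speciation edge of $S$ (the root edge when $S$ is balanced, the $\lca(A,B,C)$ edge when $S$ is a caterpillar), and then partitions by which of these $l$ lineages the sampled $a,b,c,d$ descend from. For each partition class it pairs the $\qab$-favoring scenario with the $\qac$-favoring one \emph{sharing the same root-edge DL history $\mathcal{L}_r$}; this makes the two locus trees identical above the speciation line, so their $Y$-edges and duplication placements match and only the $X$-edge lengths differ. Observation~\ref{obs:grow} (monotonicity of the b-MSC quartet probability in the $X$-length) then delivers the gap comparison you are missing. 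The partition-probability inequalities themselves (e.g.\ $P[(ab,cd)]\ge P[(ac,bd)]$) are handled by Cauchy--Schwarz and an explicit linear-function argument, not by a swap bijection. Your recursion on $v^*$ could perhaps be made to work, but you would still need an analogue of Observation~\ref{obs:grow} to compare coalescent gaps across paired scenarios; without it the argument does not close.
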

Therefore, the remainder of the section is dedicated to the proof of Theorem~\ref{thm:main}.
We first prove the theorem for $S$ being balanced, and then prove it for the caterpillar case.

\begin{figure}
	\centering
	\begin{tikzpicture}[scale=0.45]
	\tikzstyle{vertex} = [circle,draw,fill,inner sep=0pt, minimum size=4pt]
	\tikzstyle{edge} = [draw,line width=1.5pt,-]
	\tikzstyle{ledge} = [draw,line width=1.5pt,-,color=blue]
	\tikzstyle{hedge} = [draw,ultra thick,-,color=blue]
	\tikzset{vlabel/.style 2 args={#1=1pt of #2}}
	\def\offset{0}
	
	%\foreach \x/\name/\label in {0/a/$a$,2/b/$b$,3/c/$c$,5/d/$d$} {
	% 	\node [vertex] (\name) at (\offset+\x,0) {};
	% 	\node [vlabel={below}{\name}] {\label};
	%}
	
	\foreach \x/\y/\name in {0/0/llr,0/3/ltr,6/0/rlr,6/3/rtr,3/3/lr,-5/-5/la,-3/-5/ra,0/-5/lb,2/-5/rb,4/-5/lc,6/-5/rc,9/-5/ld,11/-5/rd,-1.5/-3.5/abs,7.5/-3.5/cds,0.5/-3.5/rabs,5.5/-3.5/lcds,3/-1/abcds} {
		\coordinate (\name) at (\offset+\x,\y) {};
	}
	
	\foreach \u/\v in {ltr/llr,rtr/rlr,llr/la,la/ra,ra/abs,abs/lb,lb/rb,rb/rabs,rabs/abcds,abcds/lcds,lcds/lc,lc/rc,rc/cds,cds/ld,ld/rd,rd/rlr} {
		\path [edge] (\u) -- (\v);
	}
	\draw [ultra thick,dashed] (llr) -- (rlr);
	\foreach \x/\slabel in {-4/$A$,1/$B$,5/$C$,10/$D$} {
		\node[] at (\offset+\x+0.3,-6.2) {\slabel};
	}

	\foreach \x/\name in {0.5/rl1,1.5/rl2,2.5/rl3,3.8/rl4,4.9/rl5} {
		\coordinate (b\name) at (\offset+\x,0) {};
		\coordinate (t\name) at (\offset+\x,0.5) {};
		\path [ledge] (b\name) -- (t\name); 
		\path [ledge,dotted] (lr) -- (t\name);
	}
	\foreach \x/\name/\label in {-4.5/a/$a$,0.5/b/$b$,4.5/c/$c$,9.5/d/$d$} {
		\coordinate [] (\name) at (\offset+\x,-5) {};
		\node [color=blue, below=0pt of \name] {\label};
	}
	\path [ledge] (a) to[out=40,in=-100] (brl1);
	\path [ledge] (b) to[out=140,in=-120] (brl3);
	\coordinate (lcd) at (\offset+6.5,-2) {};
	\path [ledge] (c) to[out=35,in=-95] (lcd);
	\path [ledge] (d) to[out=95,in=-60] (lcd);
	\path [ledge] (lcd) -- (brl5);
\end{tikzpicture}
	\caption{An example of the partial embedding of a locus tree into balanced $S$. The blue lineages correspond to the locus tree. Note that the five locus lineages crossing the dashed speciation line are \emph{root lineages}.}
	\label{fig:embed-balanced}
\end{figure}

\subsection{$S$ is balanced}\label{sec:consistency-balanced}
\todo{make sure proper $>$ is achieved in at least one of the cases.}
Similarly to~\cite{Legried:2019dl}, we first of all implicitly condition our probability space on the event that at least one of each $a, b, c$ and $d$ genes must be present in $G$.
Further, we condition our probability space on a fixed number of locus tree lineages existing at the speciation point at the root of $S$. That is, consider the duplication/loss (birth/death) process occurring within the root branch of $S$. Then let $RL$ be the random variable denoting the number of locus lineages at the speciation point (see Figure~\ref{fig:embed-balanced}). We are going to prove that
\[
P[\qab \in G \mid RL=l] > P[\qac \in G \mid RL=l] = P[\qad \in G \mid RL=l]
\]
for any fixed value of $l = \set{1,2,\ldots}$. Therefore, for convenience, we do not explicitly write the condition $RL=l$ in probability equations throughout the rest of the proof. Further, we refer to the set of these $l$ locus lineages as \emph{root lineages}.

Let now $i_a \in \set{1, \ldots, l}$ be the index of a root lineage, from which gene $a$ has descended. Similarly, we define $i_b, i_c$, and $i_d$. For the better readability of the rest of the proof, we introduce the notation to describe scenarios of the type $i_a = i_b = i_c \ne i_d$. In particular, we write $(abc,d)$ for that scenario, we write $(ab,cd)$ to denote the scenario $i_a = i_b \ne i_c = i_d$, and we write $(a,b,c,d)$ to denote the scenario, where all $i_x$ are distinct.

Then by the law of total probability.
\[
P[\qab \in G] = \sum_{I}{P[\qab \in G, I]}.
\]
Where $I$ is one of the above scenarios (i.e., a partition of set $\set{a,b,c,d}$ or a combination of such partitions). In particular $I \in \set{(a,b,c,d);(ab,cd)\lor(ac,bd);(ab,c,d)\lor(cd,a,b)\lor(ac,b,d)\lor (bd,a,c);\allowbreak(abc,d)\lor\allowbreak(abd,c)\lor\allowbreak(acd,b)\lor(bcd,a)\lor(abcd);(ad,bc)\lor(ad,b,c)\lor(bc,a,d)}$. Observe that we cover all possible scenarios/partitions here.

We now prove that $P[\qab \in G] > P[\qac \in G]$. Note that $P[\qad \in G] = P[\qac \in G]$ simply follows from the fact that swapping $c$ and $d$ leaf labels does not affect the probabilities. Let us carry out the proof by considering different values of $I$. That is, we prove that $P[\qab \in G, I] \ge P[\qac \in G, I]$ for all of the above $I$, and at least in one case the \emph{strict} inequality holds.

\subsubsection{Case $\boldsymbol{I = (a,b,c,d)}$} By symmetry of the DLCoal process at the root edge, it is not difficult to see that in this case $P[\qab \in G \mid I] = P[\qac \in G \mid I]$. Hence, $P[\qab \in G, I] = P[\qac \in G, I]$

\subsubsection{Case $\boldsymbol{I = (ab,cd) \lor (ac,bd)}$} \label{sec:(ab,cd)}
We need to show that
\[
\begin{split}
&P[\qab \in G, I] = P[\qab \in G \mid (ab,cd)]P[(ab,cd)] + P[\qab \in G \mid (ac,bd)]P[(ac,bd)]\\ 
&\ge P[\qac \in G \mid (ab,cd)]P[(ab,cd)] + P[\qac \in G \mid (ac,bd)]P[(ac,bd)] = P[\qac \in G, I].
\end{split}
\]

Observe the following.

\begin{figure}
	\centering
	\begin{tikzpicture}[scale=0.45]
	\tikzstyle{vertex} = [circle,draw,fill,inner sep=0pt, minimum size=4pt]
	\tikzstyle{edge} = [draw,line width=1.5pt,-]
	\tikzstyle{ledge} = [draw,line width=1.5pt,-,color=blue]
	\tikzstyle{hedge} = [draw,ultra thick,-,color=blue]
	\tikzset{vlabel/.style 2 args={#1=1pt of #2}}
	\def\offset{0}
	
	\foreach \x/\y/\name in {0/0/llr,0/3/ltr,6/0/rlr,6/3/rtr,3/3/lr,-3/-3/lab,1/-3/rab,5/-3/lcd,9/-3/rcd,3/-1/rs,0.2/-1.5/cab,5.5/-1.5/ccd,1.2/0/brl1,2.8/0/brl2,2/2/irl} {
		\coordinate (\name) at (\offset+\x,\y) {};
	}
	
	\foreach \u/\v in {ltr/llr,rtr/rlr,llr/lab,rlr/rcd,rs/rab,rs/lcd} {
		\path [edge] (\u) -- (\v);
	}
	\draw [ultra thick,dashed] (llr) -- (rlr);
	\path [ledge] (brl1) -- (irl);
	\path [ledge] (brl2) -- (irl);
	\path [ledge,dotted] (lr) -- (irl);
	\foreach \x/\name in {4/rl3,5.5/rl4} {
		\coordinate (b\name) at (\offset+\x,0) {};
		\coordinate (t\name) at (\offset+\x,0.5) {};
		\path [ledge] (b\name) -- (t\name); 
		\path [ledge,dotted] (lr) -- (t\name);
	}
	\foreach \x/\name/\label in {-2.1/a/$a$,0/b/$b$,5.5/c/$c$,7.5/d/$d$} {
		\coordinate [] (\name) at (\offset+\x,-3.2) {};
		\node [color=blue, below=0pt of \name] {\label};
	}
	\path [ledge] (a) to[out=60,in=-130] (cab);
	\path [ledge] (b) to[out=80,in=-100] (cab);
	\path [ledge] (c) to[out=70,in=-95] (ccd);
	\path [ledge] (d) to[out=95,in=-60] (ccd);
	\path [ledge] (cab) -- (brl1);
	\path [ledge] (ccd) -- (brl2);
	
	%--------------
	\def\offset{15}
	
	\foreach \x/\y/\name in {0/0/llr,0/3/ltr,6/0/rlr,6/3/rtr,3/3/lr,-3/-3/lab,1/-3/rab,5/-3/lcd,9/-3/rcd,3/-1/rs,0.2/-1.5/cab,5.5/-1.5/ccd,1.2/0/brl1,2.8/0/brl2,2/2/irl} {
		\coordinate (\name) at (\offset+\x,\y) {};
	}
	
	\foreach \u/\v in {ltr/llr,rtr/rlr,llr/lab,rlr/rcd,rs/rab,rs/lcd} {
		\path [edge] (\u) -- (\v);
	}
	\draw [ultra thick,dashed] (llr) -- (rlr);
	\path [ledge] (brl1) -- (irl);
	\path [ledge] (brl2) -- (irl);
	\path [ledge,dotted] (lr) -- (irl);
	\foreach \x/\name in {4/rl3,5.5/rl4} {
		\coordinate (b\name) at (\offset+\x,0) {};
		\coordinate (t\name) at (\offset+\x,0.5) {};
		\path [ledge] (b\name) -- (t\name); 
		\path [ledge,dotted] (lr) -- (t\name);
	}
	\foreach \x/\name/\label in {-2.1/a/$a$,0/b/$b$,5.5/c/$c$,7.5/d/$d$} {
		\coordinate [] (\name) at (\offset+\x,-3.2) {};
		\node [color=blue, below=0pt of \name] {\label};
	}
	\path [ledge] (a) to[out=60,in=-130] (brl1);
	\path [ledge] (b) to[out=70,in=-110] (brl2);
	\path [ledge] (c) to[out=70,in=-25] (brl1);
	\path [ledge] (d) to[out=100,in=-30] (brl2);
\end{tikzpicture}
	\caption{Left: the embedding of a locus tree $L_{(ab,cd)}$. Right: the embedding of a locus tree $L_{(ac,bd)}$.}
	\label{fig:2lin-embed}
\end{figure}

\begin{lemma}\label{lem:2lin}
	$P[\qab \in G \mid (ab,cd)] = P[\qac \in G \mid (ac,bd)] = 1$.
\end{lemma}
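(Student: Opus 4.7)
The strategy is to reduce Lemma~\ref{lem:2lin} to the balanced-locus-tree Lemma~\ref{lem:balanced-lt}. Concretely, I would show that under each of the two sub-scenarios the restricted locus tree $L|_{\{a,b,c,d\}}$ is balanced and displays the desired quartet; invoking Lemma~\ref{lem:balanced-lt} for every such $L$ and then marginalizing over the locus tree yields the conditional probability $1$.

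For the scenario $(ab,cd)$, I would let $\ell_1$ and $\ell_2$ denote the two distinct root lineages associated with $\{a,b\}$ and $\{c,d\}$, respectively. Since $a \in A$ and $b \in B$ both descend from $\ell_1$, this single lineage must cross the root speciation of $S$ and also the speciation node $v_{AB}$ separating $A$ from $B$; hence $\lca_L(a,b) \preceq v_{AB}$, so in particular $\lca_L(a,b)$ lies strictly below the root of $S$. Symmetrically $\lca_L(c,d) \preceq v_{CD}$. On the other hand, $\lca_L(\ell_1,\ell_2)$ sits on the root edge of $S$, and the key point is that it must be a duplication node, because the root edge admits no speciation events by construction of the DLCoal process. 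Suppressing degree-two vertices, the restricted locus tree $L|_{\{a,b,c,d\}}$ then has this duplication as its root with one subtree containing $\{a,b\}$ and the other containing $\{c,d\}$, so it is balanced and displays $\qab$.

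The scenario $(ac,bd)$ is analogous. Here $\ell_1$ carries both $a$ and $c$ through the root speciation of $S$, which splits $\ell_1$ into an $AB$-side copy (feeding $a$) and a $CD$-side copy (feeding $c$); hence $\lca_L(a,c)$ is exactly at the root-speciation node inside $\ell_1$, and likewise for $\lca_L(b,d)$ inside $\ell_2$. Both lie strictly below the root of $S$, while $\lca_L(\ell_1,\ell_2)$ is again a duplication on the root edge, so $L|_{\{a,b,c,d\}}$ is balanced and displays $\qac$. Applying Lemma~\ref{lem:balanced-lt} in each sub-scenario finishes the proof.

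The main potential pitfall is making the ``root lineage'' bookkeeping airtight: even when $RL = l > 2$, the two specific root lineages selected by the scenario may be separated from the other $l-2$ root lineages by several duplication nodes in the root edge, but their mutual MRCA in $L$ is still a duplication (never a speciation), which is exactly what is needed for the restricted locus tree to be balanced at that node. Once this observation is in place, Lemma~\ref{lem:balanced-lt} does the rest.
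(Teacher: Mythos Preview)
Your approach is essentially identical to the paper's: both argue that the restricted locus tree $L|_{\{a,b,c,d\}}$ is balanced in each of the two scenarios and then invoke Lemma~\ref{lem:balanced-lt}. One small inaccuracy worth fixing: in the $(ab,cd)$ case, the lineage $\ell_1$ need not reach $v_{AB}$ as a \emph{single} lineage (it may duplicate along the internal edge of $S$ above $v_{AB}$), so the claim $\lca_L(a,b) \preceq v_{AB}$ can fail; the correct and sufficient statement is simply that $\lca_L(a,b)$ lies at or below the root-speciation point on $\ell_1$, hence strictly below $\lca_L(\ell_1,\ell_2)$, which is all you need for balancedness.
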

\begin{proof}
	Consider the locus trees $L_{(ab,cd)}$ and $L_{(ac,bd)}$ for the $(ab,cd)$ and $(ac,bd)$ cases respectively (see Figure~\ref{fig:2lin-embed}). Note that we only consider the part of the locus tree restricted to the four selected genes $a,b,c,d$. It is not difficult to see that both $L_{(ab,cd)}$ and $L_{(ac,bd)}$ are balanced. Therefore, by Lemma~\ref{lem:balanced-lt}, $P[\qab \in G \mid (ab,cd)] = P[\qac \in G \mid (ac,bd)] = 1$. 
\end{proof}
\begin{corollary}\label{cor:2lin-2}
	$P[\qac \in G \mid (ab,cd)] = P[\qab \in G \mid (ac,bd)] = 0$.
\end{corollary}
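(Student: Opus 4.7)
The plan is to observe that Corollary~\ref{cor:2lin-2} is an immediate consequence of Lemma~\ref{lem:2lin} combined with the fact that on four leaves exactly one of the three possible quartet topologies $\qab$, $\qac$, $\qad$ can be displayed by the gene tree at any time, so these three events are mutually exclusive (and their probabilities sum to $1$).

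Concretely, first I would invoke Lemma~\ref{lem:2lin}: under scenario $(ab,cd)$ the restricted locus tree $L_{(ab,cd)}$ is balanced and displays $\qab$, so Lemma~\ref{lem:balanced-lt} yields $P[\qab \in G \mid (ab,cd)] = 1$. Since $\qab$ and $\qac$ are distinct (incompatible) quartet topologies on $\{a,b,c,d\}$, the event $\{\qab \in G\}$ and the event $\{\qac \in G\}$ are disjoint; hence
\[
P[\qac \in G \mid (ab,cd)] \le 1 - P[\qab \in G \mid (ab,cd)] = 0.
\]
The same argument, applied to the scenario $(ac,bd)$ whose restricted locus tree $L_{(ac,bd)}$ is balanced and displays $\qac$, gives $P[\qab \in G \mid (ac,bd)] = 0$.

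There is no real obstacle here: the corollary is essentially a restatement of Lemma~\ref{lem:2lin} using the fact that a binary gene tree on four leaves displays a unique quartet topology. The only thing worth being explicit about is that we are conditioning on the same partition scenarios as in Lemma~\ref{lem:2lin}, so the restricted locus trees in Figure~\ref{fig:2lin-embed} apply verbatim.
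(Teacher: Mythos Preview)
Your proposal is correct and matches the paper's approach: the corollary is stated immediately after Lemma~\ref{lem:2lin} with no separate proof, precisely because it follows at once from the mutual exclusivity of the three quartet topologies together with $P[\qab \in G \mid (ab,cd)] = P[\qac \in G \mid (ac,bd)] = 1$.
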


\begin{lemma}\label{lem:ab,cd}
	$P[(ab,cd)] \ge P[(ac,bd)]$.
\end{lemma}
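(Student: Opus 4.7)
For each root lineage $i \in \{1, \ldots, l\}$, let $n_i$ denote the number of D/L-descendants of lineage $i$ present at the $(A,B)$-speciation vertex of $S$, and let $m_i$ be the analogous count at the $(C,D)$-speciation vertex. Because the duplication/loss process runs independently in disjoint branches of $S$, the vectors $(n_i)$ and $(m_i)$ are independent, their entries are i.i.d.\ across $i$ (root lineages are exchangeable), and, conditionally on $(n_k)$, the per-lineage species-$A$ count $\xi_A^{(i)}$ and species-$B$ count $\xi_B^{(i)}$ are independent sums of $n_i$ i.i.d.\ descendant counts arising from the (independent) $A$- and $B$-branch processes. Write $T_X = \sum_i \xi_X^{(i)}$ for $X \in \{A,B,C,D\}$; all probabilities and expectations below are implicitly conditional on $RL = l$ and on $T_A, T_B, T_C, T_D \ge 1$.

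The first step is to unfold the uniform-choice probabilities,
\[
P[(ab,cd)] = E\!\left[\sum_{i \ne j} \tfrac{\xi_A^{(i)} \xi_B^{(i)}}{T_A T_B}\cdot \tfrac{\xi_C^{(j)} \xi_D^{(j)}}{T_C T_D}\right], \quad P[(ac,bd)] = E\!\left[\sum_{i \ne j} \tfrac{\xi_A^{(i)} \xi_C^{(i)}}{T_A T_C}\cdot \tfrac{\xi_B^{(j)} \xi_D^{(j)}}{T_B T_D}\right].
\]
The heart of the argument is the \emph{exchangeable-fraction identity}: for each fixed $i$,
\[
E\!\left[\xi_A^{(i)}/T_A \,\Big|\, (n_k)\right] = \tfrac{n_i}{\sum_k n_k}.
\]
It holds because $T_A$ is a sum of $\sum_k n_k$ i.i.d.\ descendant counts (over all $(A,B)$-speciation descendant lineages), of which $n_i$ are labelled ``root lineage $i$''; the conditioning event $\{T_A \ge 1\}$ is symmetric in these i.i.d.\ variables, so each contributes equal expected fractional share $1/\sum_k n_k$. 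Applying the identity twice and using $\xi_A \perp \xi_B \mid (n_k)$ gives, for $i \ne j$,
\[
E\!\left[\tfrac{\xi_A^{(i)}\xi_B^{(i)}}{T_A T_B} \,\Big|\, (n_k)\right] = \tfrac{n_i^2}{(\sum_k n_k)^2}, \quad E\!\left[\tfrac{\xi_A^{(i)}\xi_B^{(j)}}{T_A T_B} \,\Big|\, (n_k)\right] = \tfrac{n_i n_j}{(\sum_k n_k)^2},
\]
together with the analogous formulas on the $(C,D)$-side using $(m_k)$.

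Substituting these and using $(n_k) \perp (m_k)$, both sums over $i \ne j$ collapse to
\[
P[(ab,cd)] = l(l-1)\,\gamma_N \gamma_M, \quad P[(ac,bd)] = l(l-1)\,\delta_N \delta_M,
\]
where $\gamma_N := E[n_1^2/(\sum_k n_k)^2]$, $\delta_N := E[n_1 n_2/(\sum_k n_k)^2]$, and $\gamma_M, \delta_M$ are defined analogously. It therefore suffices to prove $\gamma_N \ge \delta_N$ and $\gamma_M \ge \delta_M$; both follow from the pointwise bound $n_1 n_2 \le \tfrac{1}{2}(n_1^2 + n_2^2)$ combined with the exchangeability of $n_1$ and $n_2$ under the symmetric conditioning $\{T_A \ge 1, T_B \ge 1\}$ (and likewise for $m$).

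The main obstacle I anticipate is the clean justification of the exchangeable-fraction identity---in particular, verifying that the conditioning $\{T_A \ge 1, T_B \ge 1\}$ preserves both the exchangeability of the underlying descendant counts and the conditional independence $\xi_A \perp \xi_B \mid (n_k)$. Both facts should hold because $\{T_A \ge 1\}$ depends only on the $A$-branch descendant counts and $\{T_B \ge 1\}$ only on the $B$-branch descendant counts, so the two conditionings interact cleanly with the product structure, but the bookkeeping needs care.
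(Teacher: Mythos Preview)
Your argument is correct and follows the same conditioning strategy as the paper (which in turn follows Legried et al.): reduce to the lineage counts $(N_i)=(n_i)$ and $(M_i)=(m_i)$ at the two lower speciation vertices, observe that conditionally on these counts the four indices $i_a,i_b,i_c,i_d$ become independent with $P[i_a=j\mid (N_i)]=N_j/\sum_k N_k$, and finish with an elementary inequality. Two points of comparison are worth noting. First, the paper simply \emph{asserts} the formulas
\[
P[i_a=i_b,\,i_c=i_d\mid (N_i),(M_i)]=\frac{\sum_j N_j^2}{(\sum_j N_j)^2}\cdot\frac{\sum_j M_j^2}{(\sum_j M_j)^2},\qquad
P[i_a=i_c,\,i_b=i_d\mid (N_i),(M_i)]=\Bigl(\frac{\sum_j N_jM_j}{(\sum_j N_j)(\sum_j M_j)}\Bigr)^{2},
\]
whereas you take the extra care to introduce the per-species counts $\xi_X^{(i)}$ and derive $E[\xi_A^{(i)}/T_A\mid (n_k)]=n_i/\sum_k n_k$ from exchangeability; this is exactly the step the paper leaves implicit, and your justification of why the survival conditioning $\{T_A\ge 1,\ldots,T_D\ge 1\}$ respects the product/symmetry structure is a genuine addition. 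Second, the endgame differs: the paper compares the two displayed expressions \emph{pointwise} via Cauchy--Schwarz, $(\sum_j N_jM_j)^2\le(\sum_j N_j^2)(\sum_j M_j^2)$, which requires no exchangeability of the $(N_i)$ at all; you instead take a further expectation, factor as $l(l-1)\gamma_N\gamma_M$ versus $l(l-1)\delta_N\delta_M$, and then use exchangeability of the $n_i$'s together with $n_1n_2\le\tfrac12(n_1^2+n_2^2)$ to get $\gamma_N\ge\delta_N$. Both routes are valid; the paper's Cauchy--Schwarz is slightly more economical since it avoids the extra exchangeability hypothesis, while your factorization has the pleasant feature of separating the $AB$-side and $CD$-side contributions.
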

\begin{proof}
	Our proof is similar to the proof of Lemma~1 in Legried et al.~\cite{Legried:2019dl}.
	In particular, let $N_i \in \set{0, 1, \ldots}$ be the number of locus lineages that descended from a root lineage $i \in \set{1, \ldots, l}$ and that existed at the moment of speciation into species $A$ and $B$. Similarly, we define $M_i$ variables for the number of lineages that existed at the speciation at the parent of $C$ and $D$.
	
	Observe that $P[(ab,cd)] = P[i_a = i_b, i_c = i_d] - P[(abcd)]$ and $P[(ac,bd)] = P[i_a = i_c, i_b = i_d] - P[(abcd)]$. Further, note that when conditioned on specific values of $N_i$'s and $M_i$'s, $i_a=i_c$ and $i_b = i_d$ events become independent. That is, for fixed $(N_i)$ and $(M_i)$ values we have
	\begin{gather*}
	\begin{split}
	P[i_a = i_b, i_c = i_d \mid (N_i), (M_i)] &=
	\frac{\sum_{j}(N_j^2)}{(\sum_{j}{N_j})^2} \frac{\sum_{j}(M_j^2)}{(\sum_{j}{M_j})^2} 
	\end{split}\\
	P[i_a = i_c, i_b = i_d \mid (N_i), (M_i)] =
	\frac{\sum_{j}(N_j M_j)}{(\sum_{j}{N_j})(\sum_{j}{M_j})} \frac{\sum_{j}(N_j M_j)}{(\sum_{j}{N_j})(\sum_{j}{M_j})}. 
	\end{gather*}
	Then by Cauchy-Schwartz, $(\sum_{j}(N_j M_j))^2 \le \sum_{j}(N_j^2)\sum_{j}(M_j^2)$ and therefore $P[i_a = i_b, i_c = i_d \mid (N_i), (M_i)] \ge P[i_a = i_c, i_b = i_d \mid (N_i), (M_i)]$ for any realization of $(N_i)$ and $(M_i)$ variables. That is, $P[(ab,cd)] \ge P[(ac,bd)]$.
\end{proof}
Using the above results, we have
\[
P[\qab \in G, I] = P[(ab,cd)] \ge P[(ac,bd)] = P[\qac \in G, I].
\]

\newcommand{\cAB}{AB}
\newcommand{\cAC}{AC}

\subsubsection{Case $\boldsymbol{I = (ab,c,d)\lor(cd,a,b)\lor(ac,b,d)\lor(bd,a,c)}$} \label{sec:(ab,c,d)}
We prove that
\[
\begin{split}
P[\qab \in G, I]&=P[\qab \in G \mid (ab,c,d) \lor (cd,a,b)]P[(ab,c,d) \lor (cd,a,b)]\\
&\ \ + P[\qab \in G \mid (ac,b,d) \lor (bd,a,c)]P[(ac,b,d) \lor (bd,a,c)]\\ 
&\ge P[\qac \in G \mid (ab,c,d) \lor (cd,a,b)]P[(ab,c,d) \lor (cd,a,b)]\\
&\ \ +P[\qac \in G \mid (ac,b,d) \lor (bd,a,c)]P[(ac,b,d) \lor (bd,a,c)] = P[\qac \in G, I].
\end{split}
\]
From now on, for convenience, we will denote the event $(ab,c,d)\lor(cd,a,b)$ by $\boldsymbol{\cAB}$ and the event $(ac,b,d)\lor(bd,a,c)$ by $\boldsymbol{\cAC}$.

Consider the following results.

\todo{AM: Note that actually a strict inequality holds in this lemma, but not necessary for the proof.}
\begin{lemma}\label{lem:3lin}
	$P[\qab \in G \mid \cAB)] \ge P[\qac \in G \mid \cAC]$.
\end{lemma}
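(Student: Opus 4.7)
The plan rests on one structural observation about where the ``shared cherry'' of the restricted locus tree lies in time.
Under $\cAB$ the pair of genes sharing a root lineage (either $a,b$ or $c,d$) lies within a single species subtree of $S$, so the MRCA of this pair in the locus tree falls at or below the root speciation of $S$: either at the corresponding $A|B$ or $C|D$ speciation node, or at a duplication along the respective ancestor edge.
Under $\cAC$, by contrast, the shared pair ($a,c$ or $b,d$) straddles the root speciation, which forces their locus-tree MRCA to coincide exactly with the root speciation.
Writing $t^{\cAB}_{\text{ch}}$ and $t^{\cAC}_{\text{ch}}$ for the two cherry times, we thus always have $t^{\cAB}_{\text{ch}} \le t^{\cAC}_{\text{ch}}$.

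I would then set up a coupling between realizations under the two scenarios that fixes the root-edge portion of the locus tree (the number $l$ of root lineages and the times of the $l-1$ duplications between them) together with the identity of the three root lineages on which the four restricted genes sit.
Under this coupling the order in which the three distinguished root lineages merge in the root edge is identical, so the topology of the restricted locus tree coincides (either both balanced, or both the same caterpillar up to the label swap $b\leftrightarrow c$) and every edge length of the restricted tree above the root speciation matches as well; the only difference is that the $X$ edge of the restricted caterpillar locus tree is longer under $\cAB$ by exactly $t^{\cAC}_{\text{ch}} - t^{\cAB}_{\text{ch}} \ge 0$.

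In the balanced sub-case Lemma~\ref{lem:balanced-lt} gives $P[\qab \in G \mid L_{\cAB}] = P[\qac \in G \mid L_{\cAC}] = 1$, so the two sides match.
In the caterpillar sub-case, Observation~\ref{obs:grow}(ii) together with the relabeling symmetry $b \leftrightarrow c$ (which carries the $(a,b)$-cherry caterpillar into the $(a,c)$-cherry caterpillar of the same shape, identifying $P[\qab \in G \mid \cdot]$ with $P[\qac \in G \mid \cdot]$ on trees of identical shape) yields $P[\qab \in G \mid L_{\cAB}] \ge P[\qac \in G \mid L_{\cAC}]$ pointwise on the coupling.
A small caveat: when the shared cherry under $\cAB$ is itself a duplication node rather than a speciation, each of its two child subtrees in the restricted locus tree contains only one element of $\{a,b,c,d\}$, so the induced b-MSC constraint is vacuous and the formulas of Sec.~\ref{sec:prob-cat} apply unchanged, treating the cherry exactly as if it were a speciation at time $t^{\cAB}_{\text{ch}}$.

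The main obstacle will be to verify that this coupling preserves the marginal distributions of $L$ under $\cAB$- and $\cAC$-conditioning; this reduces to the exchangeability of root lineages under the DLC birth-death process in the root edge together with a relabeling symmetry on the birth-death subprocesses in the $AB$- and $CD$-subtrees.
Once this is in hand, taking expectations over the coupling would deliver $P[\qab \in G \mid (ab,c,d)] \ge P[\qac \in G \mid (ac,b,d)]$ and analogously $P[\qab \in G \mid (cd,a,b)] \ge P[\qac \in G \mid (bd,a,c)]$; combining these via the standard conditional-probability decomposition of $P[\qab \in G \mid \cAB]$ and $P[\qac \in G \mid \cAC]$ then yields the lemma.
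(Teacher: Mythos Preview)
Your approach is essentially the paper's: condition on a fixed realization of the duplication--loss process in the root edge (your ``coupling'' is exactly the paper's conditioning on the root-edge scenario $\cL_r$ together with the WLOG assignment of the three distinguished root lineages), observe that the restricted locus trees $L_{(ab,c,d)}$ and $L_{(ac,b,d)}$ then share the same topology and the same structure above the root speciation, and finish with Lemma~\ref{lem:balanced-lt} in the balanced sub-case and Observation~\ref{obs:grow}(ii) in the caterpillar sub-case. The paper also reduces the four sub-events of $\cAB$ and $\cAC$ to just $(ab,c,d)$ versus $(ac,b,d)$ by the same symmetry you invoke at the end.

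There is one small but genuine gap. You assert that under the coupling ``the only difference is that the $X$ edge of the restricted caterpillar locus tree is longer under $\cAB$'', and then apply Observation~\ref{obs:grow}(ii), whose hypothesis requires \emph{identical locations of duplication nodes} on the two trees being compared. But the extra $x_{ab}$ segment of the $X$ edge in $L_{(ab,c,d)}$ (the portion between the $\{a,b\}$ cherry and the root speciation) may itself carry a constraining duplication from the full locus tree --- namely a duplication whose daughter subtree contains both $a$ and $b$. Your caveat only treats the case where the cherry \emph{node} is a duplication (that constraint is indeed vacuous, since each side of the cherry carries a single sampled gene); it does not cover a duplication strictly above the cherry but still below the root speciation. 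In that situation the duplication patterns on the two $X$ edges disagree and Observation~\ref{obs:grow}(ii) is not directly applicable. The paper closes this by an explicit case split: if there is such a duplication on the $x_{ab}$ segment, then Section~\ref{sec:cat-Xdup} gives $P[\qab \in G \mid L_{(ab,c,d)}]=1$ and the desired inequality is immediate; otherwise the duplication patterns above the root speciation coincide on both restricted trees and Observation~\ref{obs:grow}(ii) applies exactly as you intend.
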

\begin{figure}
	\centering
	\begin{tikzpicture}[scale=0.45]
	\tikzstyle{vertex} = [circle,draw,fill,inner sep=0pt, minimum size=4pt]
	\tikzstyle{edge} = [draw,line width=1.5pt,-]
	\tikzstyle{ledge} = [draw,line width=1.5pt,-,color=blue]
	\tikzstyle{hedge} = [draw,ultra thick,-,color=blue]
	\tikzset{vlabel/.style 2 args={#1=1pt of #2}}
	\def\offset{0}
	
	\foreach \x/\y/\name in {0/0/llr,0/3/ltr,6/0/rlr,6/3/rtr,3/3.5/lr,-3/-3/lab,1.5/-3/rab,4.5/-3/lcd,9/-3/rcd,3/-1.5/rs,0.2/-2/cab,1.5/0/brl1,3/0/brl2,4.5/0/brl3,2.8/2.5/irl1,2.25/1.5/irl2} {
		\coordinate (\name) at (\offset+\x,\y) {};
	}
	
	\foreach \u/\v in {ltr/llr,rtr/rlr,llr/lab,rlr/rcd,rs/rab,rs/lcd} {
		\path [edge] (\u) -- (\v);
	}
	\draw [ultra thick,dashed] (llr) -- (rlr);
	\path [ledge] (brl1) -- node[above=-3pt,color=black,rotate=60] {$x'$} (irl2) node[pos=0.7] (dab') {};
	\path [ledge] (brl2) -- (irl2);
	\path [ledge] (brl3) -- (irl1);
	\path [ledge] (irl2) -- (irl1) node[pos=0.5] (dabc) {};
	\path [ledge,dotted] (lr) -- (irl1) node[pos=0.5] (dr) {};
	\foreach \x/\name in {5.5/rl4} {
		\coordinate (b\name) at (\offset+\x,0) {};
		\coordinate (t\name) at (\offset+\x,0.5) {};
		\path [ledge] (b\name) -- (t\name); 
		\path [ledge,dotted] (lr) -- (t\name);
	}
	\foreach \x/\name/\label in {-2.1/a/$a$,0/b/$b$,5.5/c/$c$,7.5/d/$d$} {
		\coordinate [] (\name) at (\offset+\x,-3.4) {};
		\node [color=blue, below=0pt of \name] {\label};
	}
	\path [ledge] (a) to[out=60,in=-130] (cab);
	\path [ledge] (b) to[out=80,in=-100] (cab);
	\path [ledge] (c) to[out=70,in=-95] (brl2);
	\path [ledge] (d) to[out=95,in=-60] (brl3);
	\path [ledge] (cab) -- node[above=-3pt,rotate=60,color=black] {$x_{ab}$} (brl1) node[pos=0.3] (dab) {};
	
	\foreach \loc in {dab, dab', dabc, dr} {
		\draw [line width=1pt, line cap=round,color=red] (\loc) circle (6pt);
	}
	
	%--------------
	\def\offset{15}
	
	\foreach \x/\y/\name in {0/0/llr,0/3/ltr,6/0/rlr,6/3/rtr,3/3.5/lr,-3/-3/lab,1.5/-3/rab,4.5/-3/lcd,9/-3/rcd,3/-1.5/rs,0.2/-2/cab,1.5/0/brl1,3/0/brl2,4.5/0/brl3,2.8/2.5/irl1,2.25/1.5/irl2} {
		\coordinate (\name) at (\offset+\x,\y) {};
	}
	
	\foreach \u/\v in {ltr/llr,rtr/rlr,llr/lab,rlr/rcd,rs/rab,rs/lcd} {
		\path [edge] (\u) -- (\v);
	}
	\draw [ultra thick,dashed] (llr) -- (rlr);
	\path [ledge] (brl1) -- node[above=-3pt,color=black,rotate=60] {$x'$} (irl2) node[pos=0.7] (dab') {};
	\path [ledge] (brl2) -- (irl2);
	\path [ledge] (brl3) -- (irl1);
	\path [ledge] (irl2) -- (irl1) node[pos=0.5] (dabc) {};
	\path [ledge,dotted] (lr) -- (irl1) node[pos=0.5] (dr) {};
	\foreach \x/\name in {5.5/rl4} {
		\coordinate (b\name) at (\offset+\x,0) {};
		\coordinate (t\name) at (\offset+\x,0.5) {};
		\path [ledge] (b\name) -- (t\name); 
		\path [ledge,dotted] (lr) -- (t\name);
	}
	\foreach \x/\name/\label in {-2.1/a/$a$,0/b/$b$,5.5/c/$c$,7.5/d/$d$} {
		\coordinate [] (\name) at (\offset+\x,-3.4) {};
		\node [color=blue, below=0pt of \name] {\label};
	}
	\path [ledge] (a) to[out=60,in=-130] (brl1);
	\path [ledge] (b) to[out=80,in=-110] (brl2);
	\path [ledge] (c) to[out=90,in=-50] (brl1);
	\path [ledge] (d) to[out=95,in=-60] (brl3);
	
	\foreach \loc in {dab', dabc, dr} {
		\draw [line width=1pt, line cap=round,color=red] (\loc) circle (6pt);
	}
\end{tikzpicture}
	\caption{Caterpillar locus trees $L_{(ab,c,d)}$ (left) and $L_{(ac,b,d)}$ (right) embedded into the species tree. The red circles represent the potential duplication locations that could influence the gene tree probabilities. Note that the $\cL_r$ scenarios in the root edges are identical. That is, $x'$ lengths are equal and the duplication locations above the dashed speciation lines are identical}
	\label{fig:3lin-1}
\end{figure}
\begin{figure}
	\centering
	\begin{tikzpicture}[scale=0.45]
	\tikzstyle{vertex} = [circle,draw,fill,inner sep=0pt, minimum size=4pt]
	\tikzstyle{edge} = [draw,line width=1.5pt,-]
	\tikzstyle{ledge} = [draw,line width=1.5pt,-,color=blue]
	\tikzstyle{hedge} = [draw,ultra thick,-,color=blue]
	\tikzset{vlabel/.style 2 args={#1=1pt of #2}}
	\def\offset{0}
	
	\foreach \x/\y/\name in {0/0/llr,0/3/ltr,6/0/rlr,6/3/rtr,3/3.5/lr,-3/-3/lab,1.5/-3/rab,4.5/-3/lcd,9/-3/rcd,3/-1.5/rs,0.2/-2/cab,1.5/0/brl1,3/0/brl2,4.5/0/brl3,2.8/2.5/irl1,3.75/1.5/irl2} {
		\coordinate (\name) at (\offset+\x,\y) {};
	}
	
	\foreach \u/\v in {ltr/llr,rtr/rlr,llr/lab,rlr/rcd,rs/rab,rs/lcd} {
		\path [edge] (\u) -- (\v);
	}
	\draw [ultra thick,dashed] (llr) -- (rlr);
	\path [ledge] (brl1) -- (irl1);
	\path [ledge] (brl2) -- (irl2);
	\path [ledge] (brl3) -- (irl2);
	\path [ledge] (irl2) -- (irl1);
	\path [ledge,dotted] (lr) -- (irl1);
	\foreach \x/\name in {5.5/rl4} {
		\coordinate (b\name) at (\offset+\x,0) {};
		\coordinate (t\name) at (\offset+\x,0.5) {};
		\path [ledge] (b\name) -- (t\name); 
		\path [ledge,dotted] (lr) -- (t\name);
	}
	\foreach \x/\name/\label in {-2.1/a/$a$,0/b/$b$,5.5/c/$c$,7.5/d/$d$} {
		\coordinate [] (\name) at (\offset+\x,-3.4) {};
		\node [color=blue, below=0pt of \name] {\label};
	}
	\path [ledge] (a) to[out=60,in=-130] (cab);
	\path [ledge] (b) to[out=80,in=-100] (cab);
	\path [ledge] (c) to[out=70,in=-95] (brl2);
	\path [ledge] (d) to[out=95,in=-60] (brl3);
	\path [ledge] (cab) -- (brl1);
	
	%--------------
	\def\offset{15}
	
	\foreach \x/\y/\name in {0/0/llr,0/3/ltr,6/0/rlr,6/3/rtr,3/3.5/lr,-3/-3/lab,1.5/-3/rab,4.5/-3/lcd,9/-3/rcd,3/-1.5/rs,0.2/-2/cab,1.5/0/brl1,3/0/brl2,4.5/0/brl3,2.8/2.5/irl1,3.75/1.5/irl2} {
		\coordinate (\name) at (\offset+\x,\y) {};
	}
	
	\foreach \u/\v in {ltr/llr,rtr/rlr,llr/lab,rlr/rcd,rs/rab,rs/lcd} {
		\path [edge] (\u) -- (\v);
	}
	\draw [ultra thick,dashed] (llr) -- (rlr);
	\path [ledge] (brl1) -- (irl1);
	\path [ledge] (brl2) -- (irl2);
	\path [ledge] (brl3) -- (irl2);
	\path [ledge] (irl2) -- (irl1);
	\path [ledge,dotted] (lr) -- (irl1);
	\foreach \x/\name in {5.5/rl4} {
		\coordinate (b\name) at (\offset+\x,0) {};
		\coordinate (t\name) at (\offset+\x,0.5) {};
		\path [ledge] (b\name) -- (t\name); 
		\path [ledge,dotted] (lr) -- (t\name);
	}
	\foreach \x/\name/\label in {-2.1/a/$a$,0/b/$b$,5.5/c/$c$,7.5/d/$d$} {
		\coordinate [] (\name) at (\offset+\x,-3.4) {};
		\node [color=blue, below=0pt of \name] {\label};
	}
	\path [ledge] (a) to[out=60,in=-130] (brl1);
	\path [ledge] (b) to[out=80,in=-110] (brl2);
	\path [ledge] (c) to[out=90,in=-50] (brl1);
	\path [ledge] (d) to[out=95,in=-60] (brl3);
\end{tikzpicture}
	\caption{Balanced locus trees $L_{(ab,c,d)}$ (left) and $L_{(ac,b,d)}$ (right) embedded into the species tree.}
	\label{fig:3lin-2}
\end{figure}
\begin{proof}
	Note that fixing the number of root lineages allows us to treat the dup-loss processes independently for the root edge and for the lower edges. Let $\cL_r$ be a dup-loss scenario in the root edge conditioned on $RL = l$. 
	Without loss of generality then assume that in case $(ab,c,d)$, we have $i_a = i_b = 1$, $i_c=2$, and $i_d=3$; in case $(cd,a,b)$ we assume $i_c = i_d = 1$, $i_a = 2$, and $i_b = 3$. Similarly, under $(ac,b,d)$ we assume $i_a = i_c = 1, i_b = 2, i_d = 3$ and under $(bd,a,c)$ we assume that $i_b = i_d = 1, i_a = 2, i_d = 3$. 
	Then a fixed $\cL_r$ scenario similarly influences the locus tree in all four cases.
	
	Given that $(ab,c,d)$ and $(cd,a,b)$ cases are virtually identical for the rest of the proof (since they are symmetric), for simplicity, we will only consider the $(ab,c,d)$ case. Similarly, under the $\cAC$ event, we will only consider case $(ac,b,d)$.
	
	Then, Figures~\ref{fig:3lin-1} and \ref{fig:3lin-2} depict two possible topologies of the $\cL_r$ scenario when acting on the root lineages $1,2,$ and $3$. Observe that the third topology, where root lineages 1 and 3 form a cherry, is identical in terms of analysis to the topology depicted in Figure~\ref{fig:3lin-1}, and therefore is not considered.
	
	Note now that in Figure~\ref{fig:3lin-1} the resulting locus trees $L_{(ab,c,d)}$ and $L_{(ac,b,d)}$ are both caterpillars, while in Figure~\ref{fig:3lin-2} the locus trees are both balanced. We now consider these two cases individually.
	
	\begin{itemize}
		\item[(i)] \textbf{$L_{(ab,c,d)}$ and $L_{(ac,b,d)}$ are caterpillars} (see Figure~\ref{fig:3lin-1}). 
		Let $x_{ab}$ be the distance (in coalescent units) from the root speciation event to the coalescence of $a$ and $b$ under the $(ab,c,d)$ case (as shown on the figure)). Note that $x_{ab} \ge 0$.
		The there are two cases to consider.
		\begin{itemize}
			\item \textit{There is a duplication along the $x_{ab}$ lineage.} Then, as shown in section~\ref{sec:cat-Xdup}, $P[\qab \in G \mid AB, \cL_r] = 1$. That is, $P[\qab \in G \mid AB, \cL_r] \ge P[\qac \in G \mid AC, \cL_r]$.
			
			\item \textit{No duplications along the $x_{ab}$ lineage.}
			Since $L_{(ab,c,d)}$ and $L_{(ac,b,d)}$ are both caterpillars, we denote their edges by $X$ and $Y$ as shown in Figure~\ref{fig:q-cat}. Then $w(X_{(ab,c,d)}) = x' + x_{ab}$, whereas $w(X_{(ac,b,d)}) = x'$. Further, the two locus trees are identical in terms of the duplication locations in their internal edges.
			
			Then, by Observation~\ref{obs:grow}, it is not difficult to see that $P[\qab \in G \mid \cL_r, (ab,cd)] \ge P[\qac \in G \mid \cL_r, (ac,bd)]$ for any fixed $\cL_r$. Therefore, the lemma holds.
		\end{itemize}
		
		\item[(ii)] \textbf{$L_{(ab,c,d)}$ and $L_{(ac,b,d)}$ are balanced} (see Figure~\ref{fig:3lin-2}). By Lemma~\ref{lem:balanced-lt}, $P[\qab \in G \mid AB, \cL_r] = 1$ and $P[\qac \in G \mid AC, \cL_r] = 1$. That is, the lemma holds.
	\end{itemize} 
\end{proof}

\begin{lemma}\label{lem:3lin-2}
	$P[\qab \in G \mid \cAC] \ge P[\qac \in G \mid \cAB]$.
\end{lemma}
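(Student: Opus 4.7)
The plan is to mirror the proof of Lemma~\ref{lem:3lin}. By the leaf-relabelling symmetry of the two disjuncts constituting $\cAB$ and $\cAC$, it suffices to compare $P[\qab\in G\mid (ac,b,d)]$ with $P[\qac\in G\mid (ab,c,d)]$. I would condition on the root-edge duplication/loss scenario $\cL_r$ with $RL=3$. Since the below-speciation process is independent of $\cL_r$ once $RL$ is fixed, the common marginal of $\cL_r$ under both conditionings reduces the claim to the pointwise bound $P[\qab\in G\mid (ac,b,d),\cL_r]\ge P[\qac\in G\mid (ab,c,d),\cL_r]$ for every fixed $\cL_r$.

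For each $\cL_r$ I would split on the two locus-tree topologies on the three root lineages shown in Figures~\ref{fig:3lin-1} and \ref{fig:3lin-2}. In the balanced case, $L_{(ab,c,d)}$ and $L_{(ac,b,d)}$ are balanced locus trees displaying $\qab$ and $\qac$ respectively, so Lemma~\ref{lem:balanced-lt} forces $P[\qac\in G\mid L_{(ab,c,d)}] = P[\qab\in G\mid L_{(ac,b,d)}] = 0$ and the inequality is vacuous. In the caterpillar case both are caterpillars with $\omega(X_{(ab,c,d)}) = x'+x_{ab}$ and $\omega(X_{(ac,b,d)}) = x'$; their $Y$-edges and all duplication placements on $Y$ and the root edge are inherited from the same $\cL_r$ and therefore coincide.

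Within the caterpillar case I would further split on $X$-edge duplications. If $\cL_r$ contains a duplication on the $x'$ segment, Section~\ref{sec:cat-Xdup} gives $P[\qac\mid L_{(ab,c,d)}] = P[\qab\mid L_{(ac,b,d)}] = 0$. If no duplication occurs on $x'$ but the process under $(ab,c,d)$ produces a duplication on $x_{ab}$, then $P[\qac\mid L_{(ab,c,d)}] = 0 \le P[\qab\mid L_{(ac,b,d)}]$. Otherwise both $X$-edges are duplication-free, and I would invoke a relabelling trick: let $L''$ be the caterpillar obtained from $L_{(ac,b,d)}$ by swapping the leaf labels $b$ and $c$. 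Leaf-relabelling invariance of the b-MSC process yields $P[\qab\mid L_{(ac,b,d)}] = P[\qac\mid L'']$, while $L''$ displays $\qab$ with the same $Y$-edge length and duplication placements as $L_{(ab,c,d)}$ but with strictly smaller $X$-edge length ($x'$ vs.\ $x'+x_{ab}$). Observation~\ref{obs:grow}(ii) then yields $P[\qab\mid L_{(ab,c,d)}] > P[\qab\mid L'']$, and by the caterpillar identity $P[\qac]=P[\qad]$ this converts to $P[\qac\mid L_{(ab,c,d)}] < P[\qac\mid L''] = P[\qab\mid L_{(ac,b,d)}]$. Averaging over the random ``duplication on $x_{ab}$'' event preserves the inequality.

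The main obstacle is the relabelling step: one has to verify that the caterpillar $L''$ genuinely shares the $Y$-edge length, root-edge length, and all internal duplication placements with $L_{(ab,c,d)}$, so that Observation~\ref{obs:grow}(ii) is legitimately applicable. This reduces to observing that the portion of the locus tree above $irl2$ is determined entirely by the fixed $\cL_r$ and is invariant under the conditioning on $(ab,c,d)$ versus $(ac,b,d)$, after which the rest of the case analysis is a direct transcription of the argument from Lemma~\ref{lem:3lin}.
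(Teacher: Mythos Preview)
Your argument is correct, but it takes a substantially longer route than the paper's. The paper proves Lemma~\ref{lem:3lin-2} in three lines by combining Lemma~\ref{lem:3lin} with the identities
\[
2P[\qab \in G \mid \cAC] + P[\qac \in G \mid \cAC] = 1,\qquad
2P[\qac \in G \mid \cAB] + P[\qab \in G \mid \cAB] = 1,
\]
which hold because under $\cAC$ the restricted locus tree always displays $\qac$ (so $P[\qab]=P[\qad]$), and symmetrically under $\cAB$. Subtracting and using Lemma~\ref{lem:3lin} immediately yields the claim.

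Your approach instead re-runs the full $\cL_r$-conditioning and case analysis of Lemma~\ref{lem:3lin}, inserting a leaf-relabelling step ($L''$ obtained from $L_{(ac,b,d)}$ by swapping $b$ and $c$) to translate the $\qac$-displaying caterpillar into a $\qab$-displaying one so that Observation~\ref{obs:grow}(ii) applies. This works: the $Y$-edge, root-edge, and their duplication placements of $L''$ and $L_{(ab,c,d)}$ are indeed inherited from the same fixed $\cL_r$, and the relabelling identity $P[\qab\mid L_{(ac,b,d)}]=P[\qac\mid L'']$ is precisely the pointwise version of the symmetry that the paper packages into the displayed identities above. One small wording issue: you write ``$RL=3$'', but $RL=l$ is arbitrary (with $l\ge 3$); what you actually need is the restriction of $\cL_r$ to the three distinguished root lineages, exactly as in the proof of Lemma~\ref{lem:3lin}.

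In short, both proofs exploit the same underlying symmetry (equality of the two non-dominant quartet probabilities given the locus-tree quartet). The paper extracts this symmetry once, at the level of the conditional probabilities, and then reduces Lemma~\ref{lem:3lin-2} to Lemma~\ref{lem:3lin} by elementary algebra; you instead push the symmetry down to each fixed $\cL_r$ via relabelling and reprove everything pointwise. Your route is self-contained but considerably heavier; the paper's is the cleaner derivation.
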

\begin{proof}
	This result follows from Lemma~\ref{lem:3lin} and the following relations:
	\begin{gather*}
	2P[\qab \in G \mid \cAC] + P[\qac \in G \mid \cAC] = 1\\
	2P[\qac \in G \mid \cAB] + P[\qab \in G \mid \cAB] = 1.
	\end{gather*}
\end{proof}
\begin{observ}\label{obs:3lin-3}
	By Lemma~\ref{lem:greater}, $P[\qac \in G \mid \cAC] \ge P[\qab \in G \mid \cAC]$. Then combining this with Lemma~\ref{lem:3lin-2} we have $P[\qac \in G \mid \cAC] \ge P[\qac \in G \mid \cAB]$
\end{observ}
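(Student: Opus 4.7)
The plan is to obtain the claimed inequality by chaining two intermediate bounds through the pivot quantity $P[\qab \in G \mid \cAC]$, producing
\[
P[\qac \in G \mid \cAC] \ge P[\qab \in G \mid \cAC] \ge P[\qac \in G \mid \cAB].
\]
The first inequality will be supplied by Lemma~\ref{lem:greater}, while the second is exactly the content of Lemma~\ref{lem:3lin-2}, so for the latter no further work is needed beyond citing the lemma.

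For the first inequality I would argue that the event $\cAC = (ac,b,d)\lor(bd,a,c)$ forces the partial locus tree restricted to $\set{a,b,c,d}$ to display the quartet $\qac$: in either sub-case, the pair sharing a root lineage (namely $a,c$ or $b,d$) necessarily coalesces, forming a cherry of the restricted locus tree. Hence, conditional on any $\cAC$-compatible locus tree $L$, Lemma~\ref{lem:greater} applies pointwise to give $P[\qac \in G \mid L] \ge P[\qab \in G \mid L]$, and averaging over the conditional law of $L$ given $\cAC$ preserves the inequality. The inequality is then immediate from transitivity of $\ge$.

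The main thing I expect to need to justify carefully is this slightly off-label use of Lemma~\ref{lem:greater}, which is stated for a single fixed locus tree rather than for a conditioning event on scenarios. The resolution is simply to note that conditioning on $\cAC$ restricts attention to locus trees displaying $\qac$, so the pointwise bound passes through integration without issue; this is a bookkeeping step rather than a real obstacle, and once stated the observation is essentially a one-line consequence of Lemmas~\ref{lem:greater} and~\ref{lem:3lin-2}.
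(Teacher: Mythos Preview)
Your proposal is correct and follows exactly the same route as the paper: the observation is little more than a transitivity chain, with the first link supplied by Lemma~\ref{lem:greater} (applied to the $\qac$-displaying locus tree forced by $\cAC$) and the second link being Lemma~\ref{lem:3lin-2} verbatim. Your extra justification that conditioning on $\cAC$ pins down a locus tree displaying $\qac$---so that the pointwise bound of Lemma~\ref{lem:greater} integrates to the conditional bound---is the only implicit step in the paper's one-line observation, and you have filled it in correctly.
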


\begin{lemma}\label{lem:(ab,c,d)}
	$P[\cAB] \ge P[\cAC]$.
\end{lemma}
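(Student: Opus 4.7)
The plan is to mirror the conditioning scheme of Lemma~\ref{lem:ab,cd}: condition on the lineage-count vectors $(N_j)_{j=1}^l$ and $(M_j)_{j=1}^l$ at the $A|B$ and $C|D$ speciations, derive closed-form per-realization expressions for $P[\cAB]$ and $P[\cAC]$, and only then integrate over the joint law of $(N),(M)$. Setting $\pi_j := N_j/N$ and $\sigma_j := M_j/M$, the exchangeability argument already used in Lemma~\ref{lem:ab,cd} yields $P[i_a = j \mid (N)] = \pi_j$ (and analogously for $i_b, i_c, i_d$), with the four indices conditionally independent. A short inclusion--exclusion on the distinctness constraints defining the atomic events $(ab,c,d), (cd,a,b), (ac,b,d), (bd,a,c)$ collapses, after cancellations, into the clean per-realization identity
\[
P[\cAB \mid \pi,\sigma] - P[\cAC \mid \pi,\sigma] \;=\; \sum_j (\pi_j - \sigma_j)^2 \;-\; 2\bigl(PQ - R^2\bigr),
\]
where $P = \sum_j \pi_j^2$, $Q = \sum_j \sigma_j^2$, $R = \sum_j \pi_j \sigma_j$.

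Unlike the situation in Lemma~\ref{lem:ab,cd}, this difference can be \emph{negative} for specific realizations (for instance whenever $\pi$ and $\sigma$ both concentrate most of their mass on a common root lineage), so a pointwise Cauchy--Schwarz no longer suffices and the proof must close in expectation. The averaging exploits three structural facts: $N_1, \ldots, N_l$ are exchangeable (each root lineage generates its AB-clade descendants via an independent birth--death process), $M_1, \ldots, M_l$ are exchangeable, and $(N) \perp (M)$. Introducing $\alpha := l\,E[\pi_1^2]$ and $\beta := l\,E[\sigma_1^2]$ (both in $[1/l, 1]$ by Cauchy--Schwarz), exchangeability together with $\sum_j \pi_j = 1$ gives $E[\pi_j] = 1/l$ and $E[\pi_j \pi_k] = (1-\alpha)/[l(l-1)]$ for $j \ne k$, and analogously for $\sigma$. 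From these one computes $E[P+Q-2R] = \alpha + \beta - 2/l$ and $E[PQ - R^2] = (l-1)\alpha\beta/l - (1-\alpha)(1-\beta)/[l(l-1)]$; substituting yields
\[
P[\cAB] - P[\cAC] \;=\; \frac{l-2}{l(l-1)}\Bigl[(l+1)(\alpha+\beta) - 2 - 2l\alpha\beta\Bigr].
\]

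For $l = 2$ the factor $(l-2)$ vanishes and equality holds. For $l \ge 3$ it suffices to show the bracket is nonnegative on $[1/l, 1]^2$. The substitution $\alpha' := l\alpha - 1$, $\beta' := l\beta - 1 \in [0, l-1]$ transforms the bracket into $(l-1)(\alpha' + \beta') - 2\alpha'\beta' = \alpha'(l-1-\beta') + \beta'(l-1-\alpha')$, which is nonnegative termwise. The main obstacle is precisely the failure of any per-realization inequality here: the bound is a genuine second-moment statement driven by the iid/exchangeable structure of the birth--death process across root lineages, and cannot be reduced to the sort of pointwise Cauchy--Schwarz that closed Lemma~\ref{lem:ab,cd}.
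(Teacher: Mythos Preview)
Your argument is correct and, despite the different packaging, lands on exactly the same two-variable inequality as the paper. The paper works directly with the marginal probabilities $x:=P[i_a=i_b]$ and $y:=P[i_c=i_d]$ (which coincide with your $\alpha=E[P]$ and $\beta=E[Q]$), uses exchangeability of the root lineages to compute $P[i_b=i_d\mid i_a=i_c]=xy+(1-x)(1-y)/(l-1)$ --- equivalently your $lE[R^2]$ --- and after the same cancellations of the $(abc,d)$-type terms reduces to showing
\[
l(l-1)(x+y-2xy)\ \ge\ 2\bigl(l-2-lxy+x+y\bigr)\quad\text{on }[1/l,1]^2,
\]
which simplifies to your bracket $(l+1)(\alpha+\beta)-2-2l\alpha\beta\ge 0$. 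The paper then verifies this by freezing $y$ and checking the two endpoints $x=1/l$ and $x=1$ of the resulting linear function, whereas you substitute $\alpha'=l\alpha-1$, $\beta'=l\beta-1$ and factor as $\alpha'(l-1-\beta')+\beta'(l-1-\alpha')\ge 0$; both are valid. Your explicit observation that the per-realization difference $P+Q-2R-2(PQ-R^2)$ can be negative (e.g.\ $\pi=(0.9,0.1,0)$, $\sigma=(0.9,0,0.1)$ already gives $-0.0126$) is a nice clarification of \emph{why} the averaging is essential here, a point the paper's marginal computation leaves implicit.
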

\begin{proof}
	Observe that
	\begin{gather*}
	\begin{split}
	P[\cAB] = P[(ab,c,d) \lor (cd,a,b)] &= P[i_a = i_b, i_c \ne i_d] - P[(abc,d)] - P[(abd,c)]\\
	&+ P[i_c = i_d, i_a \ne i_b] - P[(acd,b)] - P[(bcd,a)];
	\end{split}\\
	\begin{split}
	P[\cAC] = P[(ac,b,d) \lor (bd,a,c)] &= P[i_a = i_c, i_b \ne i_d] - P[(abc,d)] - P[(acd,b)]\\
	&+ P[i_b = i_d, i_a \ne i_c] - P[(abd,c)] - P[(bcd,a)].
	\end{split}
	\end{gather*}
	Therefore, it is sufficient to show that
	\begin{equation*}
	P[i_a = i_b, i_c \ne i_d] + P[i_c = i_d, i_a \ne i_b] \ge P[i_a = i_c, i_b \ne i_d] + P[i_b = i_d, i_a \ne i_c].
	\end{equation*}
	
	Let $x := P[i_a = i_b]$ and $y := P[i_c = i_d]]$. Note that Legried et al. demonstrated that $x, y \ge \frac{1}{l}$~\cite{Legried:2019dl}. Then
	\begin{equation}\label{eqn:ab,c,d}
	P[i_a = i_b, i_c \ne i_d] + P[i_c = i_d, i_a \ne i_b] = x(1-y) + y(1-x) = x + y -2xy.
	%&\ge x + y - x^2 - y^2 = x(1-x) + y(1-y).
	\end{equation}
	Further,
	\[
	\begin{split}
	&P[i_b = i_d \mid i_a = i_c] = \sum_{j=1}^{l}P[i_b = i_d \mid i_a = i_c = j]P[i_a = i_c = j \mid i_a = i_c]\\
	&= \frac{1}{l}\sum_{j=1}^{l}\sum_{k=1}^{l}P[i_b = i_d = k \mid i_a = i_c = j] = \frac{1}{l}\sum_{j=1}^{l}\sum_{k=1}^{l}P[i_b = k \mid i_a = j]P[i_d = k \mid i_c = j]\\
	&= \frac{1}{l}l\Big(P[i_b = 1 \mid i_a = 1]P[i_d = 1 \mid i_c = 1] + \ldots + P[i_b = l \mid i_a = 1]P[i_d = l \mid i_c = 1]\Big)\\
	&= xy + (l-1)\frac{(1-x)}{(l-1)}\frac{(1-y)}{(l-1)}.
	\end{split}
	\]
	Last equality is due to $P[i_b = 1\mid i_a = 1] = x$ and $P[i_d = 1 \mid i_c = 1] = y$.
	Then 
	\[
	P[i_a=i_c, i_b \ne i_d] = (1- P[i_b = i_d \mid i_a = i_c])P[i_a = i_c] = (1 - xy - \frac{(1-x)(1-y)}{(l-1)})\frac{1}{l}
	\]
	\begin{equation}\label{eqn:ac,b,d}
	P[i_a=i_c, i_b \ne i_d] + P[i_b=i_d, i_a \ne i_c] = \frac{2}{l(l-1)}(l - 2 - lxy + x + y).
	\end{equation}
	
	Multiplying equations~\ref{eqn:ab,c,d} and \ref{eqn:ac,b,d} by $l(l-1)$ and fixing some $y \in [1/l, 1]$ we get two \textbf{linear} functions.
	\begin{gather*}
	f(x) := l(l-1)(x + y - 2xy)\\
	g(x) := 2(l - 2 - lxy + x + y).
	\end{gather*}
	It is then sufficient to show that $f(1/l) \ge g(1/l)$ and $f(1) \ge g(1)$ to conclude the proof (since $x$ is in the $[1/l, 1]$ range).
	\begin{gather*}
	f(1/l) = l - 1 + y(l-1)(l-2);\\
	g(1/l) = 2l - 4 - 2y + 2/l + 2y = 2l - 4+ 2/l.
	\end{gather*}
	Observe that $f(1/l)$ is minimum when $y = 1/l$ (since that is the smallest possible value for $y$). In that case $f(1/l) = l - 1 + (l^2 - 3l + 2)/l = 2l - 4 + 2/l$. That is, $f(1/l) \ge g(1/l)$ for all values of $y$. Let us now compare $f(1)$ and $g(1)$.
	\begin{gather*}
	f(1) = l(l-1)(1 - y);\\
	g(1) = 2 (l - 2 - ly + 1 + y) = 2(l - 1 - y(l-1)) = 2(l-1)(1-y).
	\end{gather*}
	Note that, e.g., for the $(ab,c,d)$ case to be feasible, we need to have $l \ge 3$. Therefore, $f(1) \ge g(1)$.	
\end{proof}

Summarizing the above results we have.
\[
\begin{split}
&P[\qab \in G \mid \cAB]P[\cAB] + P[\qab \in G \mid \cAC]P[\cAC]\\
&\ge P[\qac \in G \mid \cAC]P[\cAB] + P[\qac \in G \mid \cAB]P[\cAC]\\
&\ge P[\qac \in G \mid \cAC]P[\cAC] + P[\qac \in G \mid \cAB]P[\cAB].
\end{split}
\]
Note that the first inequality is due to Lemmas~\ref{lem:3lin} and \ref{lem:3lin-2}.
The last inequality is due to Lemma~\ref{lem:(ab,c,d)} and Observation~\ref{obs:3lin-3}.

That is, our main statement holds.

\subsubsection{Case $\boldsymbol{I = (abc,d)\lor(abd,c)\lor(acd,b)\lor(bdc,a)\lor(abcd)}$.}
In all four cases locus tree $L$ displays the quartet $\qab$. Therefore, by Lemma~\ref{lem:greater} $P[\qab \in G \mid I] > P[\qac \in G \mid I]$. Observe that we obtain the strict inequality in this case.

\subsubsection{Case $\boldsymbol{I = (ad,bc)\lor(ad,b,c)\lor(bc,a,d)}$.}
It is not difficult to see that in this case, $L$ displays quartet $\qad$. Therefore (as can be seen from the derivations in Section~\ref{sec:probs}), $P[\qab \in G \mid I] = P[\qac \in G \mid I]$.

This concludes the proof for balanced $S$.

\subsection{$S$ is a caterpillar}
Without lost of generality assume that $S$ is as appears in Figure~\ref{fig:embed-cat}. Similarly to the balanced case, we implicitly condition the probability space on a fixed number of loci (lineages) existing at the moment of speciation as shown in the figure. Note that, while in the balanced case we considered root lineages, in the caterpillar scenario we consider lineages at the least common ancestor of $A,B,$ and $C$. That is, we refer to these lineages/loci as ABC-lineages. Finally, as in the balanced case, we denote the number of ABC-lineages by $l$.

\begin{figure}
	\centering
	\begin{tikzpicture}[scale=0.55]
	\tikzstyle{vertex} = [circle,draw,fill,inner sep=0pt, minimum size=4pt]
	\tikzstyle{edge} = [draw,line width=1.5pt,-]
	\tikzstyle{ledge} = [draw,line width=1.5pt,-,color=blue]
	\tikzstyle{hedge} = [draw,ultra thick,-,color=blue]
	\tikzset{vlabel/.style 2 args={#1=1pt of #2}}
	\def\offset{0}
	
	\foreach \x/\y/\name in {0/0/labc,2/0/rabc,2/2/lbr,2/3/ltr,4/2/rbr,4/3/rtr,3/1/sr,1/-1/sabc,-1/-2.5/sab,3/2.5/rl,-3/-3/la,-1.5/-3/ra,-0.5/-3/lb,1/-3/rb,3/-3/lc,5/-3/rc,5/-1/ld,7/-1/rd,0/-2/rsab} {
		\coordinate (\name) at (\offset+\x,\y) {};
	}
	
	\path [edge] (ltr) -- (lbr) -- (la) -- (ra) -- (sab) -- (lb) -- (rb) -- (rsab) -- (sabc) -- (lc) -- (rc) -- (rabc) -- (sr) -- (ld) -- (rd) -- (rbr) -- (rtr);
	
	\draw [ultra thick,dashed] (labc) -- (rabc);
	\foreach \x/\y/\slabel in {-2.35/-3/$A$,0.15/-3/$B$,4/-3/$C$,6/-1/$D$} {
		\node[] at (\offset+\x+0.3,\y - 1.2) {\slabel};
	}

	\foreach \x/\name in {0.9/rl1,1.4/rl2,1.85/rl3} {
		\coordinate (b\name) at (\offset+\x,0) {};
		\coordinate (t\name) at (\offset+\x,0.5) {};
		\path [ledge] (b\name) -- (t\name); 
		\path [ledge,dotted] (rl) -- (t\name);
	}
	\foreach \x/\y/\name/\label in {-2.5/-3/a/$a$,0/-3/b/$b$,3.5/-3/c/$c$,5.5/-1/d/$d$} {
		\coordinate [] (\name) at (\offset+\x,\y) {};
		\node [color=blue, below=0pt of \name] {\label};
	}
	\path [ledge] (a) to[out=40,in=-130] (brl1);
	\path [ledge] (b) to[out=140,in=-120] (brl2);
	\path [ledge] (c) to[out=35,in=-95] (brl3);
	\path [ledge,dotted] (d) -- (rl);
\end{tikzpicture}
	\caption{An example of the locus tree embedding into a caterpillar species tree. The three locus lineages crossing the dashed speciation line are the ABC-lineages}
	\label{fig:embed-cat}
\end{figure}

We then use the $i_a, i_b, i_c$ notation in the same way as in the previous section (while referring to indices of ABC-lineages). Further, $\cI = \set{(a,b,c); (ab,c); (ac,b); (bc,a); (abc)}$ scenarios describe relations between $i_a, i_b,$ and $i_c$.

We now prove that $P[\qab \in G, I] \ge P[\qac \in G, I]$ for all $I$ in $\set{(a,b,c);(ab,c)\lor(ac,b);(bc,a);(abc)}$. Moreover, for at least one such $I$, the strict inequality holds; in particular, see case~\ref{sec:(abc)} below.

\subsubsection{Case $\boldsymbol{I = (a,b,c)}$.}\label{sec:(a,b,c)}
By symmetry of the DLCoal model, we have $P[\qab \in G \mid I] = P[\qac \in G \mid I] = P[\qad \in G \mid I]$. That is, reshuffling the $i_a, i_b, i_c$ labels will not affect the probabilities.

Then, $P[\qab \in G, I] = P[\qab \in G \mid I]P[I] = P[\qac \in G \mid I]P[I] = P[\qac \in G, I]$.

\subsubsection{Case $\boldsymbol{I = (ab,c) \lor (ac,b)}$.}
The proof in this case is similar to case~\ref{sec:(ab,c,d)} for balanced $S$. In particular, observe the following.

\begin{lemma}\label{lem:(ab,c)-1}
	$P[(ab,c)] \ge P[(ac,b)]$.	
\end{lemma}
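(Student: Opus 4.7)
The plan is to mirror Lemma~\ref{lem:ab,cd} by reducing the claim to the inequality $P[i_a = i_b] \ge P[i_a = i_c]$. Since $P[i_a = i_b] = P[(ab,c)] + P[(abc)]$ and $P[i_a = i_c] = P[(ac,b)] + P[(abc)]$, subtracting gives $P[(ab,c)] - P[(ac,b)] = P[i_a = i_b] - P[i_a = i_c]$, so the reduction is immediate.

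To analyze these marginal match probabilities, I would introduce two count vectors: let $N_j$ be the number of locus lineages at the AB-speciation point descended from the $j$-th ABC-lineage, and let $C_j$ be the number of $C$-genes descended from the $j$-th ABC-lineage. Because each ABC-lineage independently initiates a birth/death subprocess on the AB-ancestor branch and, separately, on the $C$-branch, both $(N_j)_{j=1}^{l}$ and $(C_j)_{j=1}^{l}$ are IID sequences and the two sequences are mutually independent. By the symmetry of the further dup/loss in the $A$- and $B$-branches (each of the $\sum_k N_k$ lineages present at the AB-speciation contributes an equal expected share $1/\sum_k N_k$ to the $A$- and $B$-gene pools) together with the conditional independence of the $A$- and $B$-branch processes given $(N_j)$, the same calculation as in Lemma~\ref{lem:ab,cd} yields
\[
P[i_a = i_b \mid (N_j)] \;=\; \frac{\sum_j N_j^2}{\bigl(\sum_k N_k\bigr)^2}.
\]
For the $(a,c)$ pair, since $(C_j)$ is independent of $(N_j)$ and exchangeable (under the implicit conditioning $S_C > 0$), we have $E[C_j/\sum_m C_m] = 1/l$, and hence $P[i_a = i_c \mid (N_j)] = (1/l)\sum_j N_j/\sum_k N_k = 1/l$.

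The proof then closes by Cauchy--Schwartz: $\sum_j N_j^2 \ge (\sum_k N_k)^2/l$, so pointwise in the realization of $(N_j)$ we have $P[i_a = i_b \mid (N_j)] \ge 1/l = P[i_a = i_c \mid (N_j)]$, and the lemma follows after taking expectations. The main subtlety relative to Lemma~\ref{lem:ab,cd} is that the direct Cauchy--Schwartz bound applied jointly to $(N_j)$ and $(C_j)$ does \emph{not} yield the pointwise inequality $\sum N_j^2/(\sum N_k)^2 \ge \sum N_j C_j/((\sum N_k)(\sum C_m))$ --- small examples show both directions can occur --- so one is forced to first integrate out $(C_j)$ using its exchangeability, and only then invoke Cauchy--Schwartz on $(N_j)$ alone.
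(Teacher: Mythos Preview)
Your proof is correct and follows the same line as the paper's: reduce to $P[i_a=i_b]\ge P[i_a=i_c]$, note that $P[i_a=i_c]=1/l$ by independence and exchangeability of the ABC-lineages, and bound $P[i_a=i_b]\ge 1/l$ via Cauchy--Schwartz on the $(N_j)$. The only difference is that the paper cites Legried et al.\ for the last inequality, whereas you unpack the argument explicitly (and add the useful remark that a joint Cauchy--Schwartz on $(N_j),(C_j)$ would not suffice pointwise).
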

\begin{proof}
	It is sufficient to show that $P[i_a = i_b] \ge P[i_a = i_c]$. Note that $i_a$ and $i_c$ are independent and therefore $P[i_a = i_c] = 1/l$. Further, Legried et al.~\cite{Legried:2019dl} showed that $P[i_a = i_b] \ge 1/l$.
\end{proof}

\begin{lemma}\label{lem:(ab,c)-2} $ $\\\vspace{-1em}
	\begin{itemize}
		\item[(i)] $P[\qab \in G \mid (ab,c)] \ge P[\qac \in G \mid (ac,b)]$;
		\item[(ii)] $P[\qab \in G \mid (ac,b)] \ge P[\qac \in G \mid (ab,c)]$;
		\item[(iii)] $P[\qac \in G \mid (ac,b)] \ge P[\qac \in G \mid (ab,c)]$.
	\end{itemize}
\end{lemma}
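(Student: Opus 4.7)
The plan for part~(i) is to mirror the argument of Lemma~\ref{lem:3lin}. I would condition on the number of ABC-lineages, on $\cL_r$ (a duplication-loss scenario above sabc, comprising the ABC-to-root edge, the root edge, the $D$-branch, and the choice of locus $d$), and on the pair $(j_1, j_2)$ of ABC-lineages involved; take $i_a = i_b = j_1$, $i_c = j_2$ under $(ab,c)$ and $i_a = i_c = j_1$, $i_b = j_2$ under $(ac,b)$. Under this common upper conditioning, the portion of the restricted locus tree above sabc — the $Y$ edge and the upper segment $x'$ of the $X$ edge (from sabc up to the $(j_1, j_2)$-merger), together with the locations of duplications within them — is identical in both scenarios.

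The structural difference occurs below sabc. Under $(ac,b)$, since $a$ and $c$ descend from the two sister children of $j_1$'s sabc speciation node, the $(a,c)$-node of $L_{(ac,b)}$ sits exactly at sabc and its $X$ edge has length precisely $x'$. Under $(ab,c)$, both $a$ and $b$ descend from $j_1$'s $(A,B)$-child, so the $(a,b)$-node lies strictly below sabc — at sab or at a duplication interior to $j_1$'s $(A,B)$-edge above sab. Letting $x_{ab} > 0$ denote the distance from this $(a,b)$-node up to sabc, $L_{(ab,c)}$ is a caterpillar with $X$ edge of length $x' + x_{ab}$ and $Y$ edge identical to that of $L_{(ac,b)}$. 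I would then split on whether $j_1$'s $(A,B)$-edge contains a duplication strictly between the $(a,b)$-node and sabc. If yes, that duplication is interior to the $X$ edge of $L_{(ab,c)}$, so by Section~\ref{sec:cat-Xdup}, $P[\qab \in G \mid L_{(ab,c)}] = 1$ and the inequality is immediate. Otherwise, the $X$ edges of $L_{(ab,c)}$ and $L_{(ac,b)}$ share the same duplication pattern on the $x'$ segment while $L_{(ab,c)}$ merely extends by a duplication-free prefix of length $x_{ab}$, so Observation~\ref{obs:grow}(ii) gives the pointwise inequality.

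For parts~(ii) and~(iii), I would use Lemma~\ref{lem:greater}. Since every realized locus tree under $(ab,c)$ displays $ab|cd$, Lemma~\ref{lem:greater} gives $P[\qac \in G \mid L] = P[\qad \in G \mid L]$ pointwise, hence $P[\qab \in G \mid (ab,c)] + 2\,P[\qac \in G \mid (ab,c)] = 1$; analogously $P[\qac \in G \mid (ac,b)] + 2\,P[\qab \in G \mid (ac,b)] = 1$. Substituting these two identities into~(i) and rearranging would yield~(ii). Part~(iii) would follow from~(ii) combined with the strict inequality $P[\qac \in G \mid (ac,b)] > P[\qab \in G \mid (ac,b)]$ (Lemma~\ref{lem:greater}); equivalently, Lemma~\ref{lem:greater} forces $P[\qac \in G \mid L] > 1/3$ pointwise under $(ac,b)$ and $P[\qac \in G \mid L] < 1/3$ pointwise under $(ab,c)$, so $P[\qac \in G \mid (ac,b)] > 1/3 > P[\qac \in G \mid (ab,c)]$.

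The main obstacle is the careful coupling in part~(i): the two scenarios engage different below-sabc sub-processes — in $(ab,c)$, lineage $j_2$ acts in the $C$-branch, while in $(ac,b)$ it acts in the $(A,B)$-branch — so the coupling must be set at the level of $(\cL_r, j_1, j_2)$. One must verify that the duplication pattern on the shared $x'$ portion of the $X$ edges and the $Y$ edge structure are determined solely by the upper conditioning, and that averaging the pointwise inequality over $(\cL_r, j_1, j_2)$ to the unconditional statement goes through by the exchangeability of the ABC-lineages and the independence of the above- and below-sabc processes.
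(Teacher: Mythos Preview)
Your plan matches the paper's own proof, which simply notes that (i)--(iii) are the caterpillar analogues of Lemma~\ref{lem:3lin}, Lemma~\ref{lem:3lin-2}, and Observation~\ref{obs:3lin-3} respectively and omits the details. One point your sketch of~(i) glosses over: you describe only the $\cL_r$-topology in which $j_1,j_2$ merge before $d$, but (as in Figures~\ref{fig:3lin-1}--\ref{fig:3lin-2} for the balanced case) you must also dispatch the upper shapes where $(j_2,d)$ or $(j_1,d)$ form the inner cherry of the restricted tree on $\{j_1,j_2,d\}$ --- the former makes both $L_{(ab,c)}$ and $L_{(ac,b)}$ balanced so Lemma~\ref{lem:balanced-lt} gives probability~$1$ on each side, and the latter is handled identically to the case you wrote out.
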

\begin{proof}
	Note that (i) corresponds to Lemma~\ref{lem:3lin}, (ii) corresponds to Lemma~\ref{lem:3lin-2}, and (iii) corresponds to Observation~\ref{obs:3lin-3} from Section~\ref{sec:(ab,c,d)}.
	The proofs of these statements are very similar, so we omit them for brevity.
\end{proof}
Then, similarly to Section~\ref{sec:(ab,c,d)} we have
\[
\begin{split}
P[\qab \in G, I] &= P[\qab \in G \mid (ab,c)]P[(ab,c)] + P[\qab \in G \mid (ac,b)]P[(ac,b)]\\
&\ge P[\qac \in G \mid (ac,b)]P[(ab,c)] + P[\qac \in G \mid (ab,c)]P[(ac,b)]\\
&\ge P[\qac \in G \mid (ac,b)]P[(ac,b)] + P[\qac \in G \mid (ab,c)]P[(ab,c)] = P[\qac \in G, I].
\end{split}
\]
%\begin{gather*}
%	P[\qab \in G, I] = \frac{P[\qab \in G \mid (ab,c)]P[(ab,c)] + P[\qab \in G \mid (ac,b)]P[(ac,b)]}{P[I]} = \frac{P[(ab,c)]}{P[I]}\\
%	P[\qac \in G \mid I] = \frac{P[\qac \in G \mid (ab,c)]P[(ab,c)] + P[\qac \in G \mid (ac,b)]P[(ac,b)]}{P[I]} = \frac{P[(ac,b)]}{P[I]}
%\end{gather*}
%That is, by Lemma~\ref{lem:(ab,c)-1}, $P[\qab \in G \mid I] \ge P[\qac \in G \mid I]$.

\subsubsection{Case $\boldsymbol{I = (bc,a)}$.}
In this case $P[\qab \in G \mid I] = P[\qac \in G \mid I]$, since the locus tree displays the third quartet, $\qad$.

\subsubsection{Case $\boldsymbol{I = (abc)}$.}\label{sec:(abc)}
The locus tree displays quartet $\qab$; therefore, by Lemma~\ref{lem:greater} and the law of total probability, $P[\qab \in G \mid I] > P[\qac \in G \mid I]$.

\section{Acknowledgments} This material is based upon work supported by the National Science Foundation under Grant No. 1617626.

\bibliographystyle{abbrv}
\bibliography{complete-no-doi}

\begin{thebibliography}{10}

\bibitem{Allman:2011unrooted}
E.~S. Allman, J.~H. Degnan, and J.~A. Rhodes.
\newblock Identifying the rooted species tree from the distribution of unrooted
  gene trees under the coalescent.
\newblock {\em Journal of mathematical biology}, 62(6):833--862, 2011.

\bibitem{Allman:2018splits-MSC}
E.~S. Allman, J.~H. Degnan, and J.~A. Rhodes.
\newblock Split probabilities and species tree inference under the multispecies
  coalescent model.
\newblock {\em Bulletin of mathematical biology}, 80(1):64--103, 2018.

\bibitem{Bininda-Emonds:2004}
O.~R. {Bininda-Emonds}, editor.
\newblock {\em {Phylogenetic Supertrees: Combining Information to Reveal the
  Tree of Life}}, volume~4 of {\em Computational Biology}.
\newblock Springer Verlag, 2004.

\bibitem{Du:2019mlocus}
P.~Du, M.~W. Hahn, and L.~Nakhleh.
\newblock Species tree inference under the multispecies coalescent on data with
  paralogs is accurate.
\newblock {\em bioRxiv}, page 498378, 2019.

\bibitem{Kubatko:2009STEM}
L.~S. Kubatko, B.~C. Carstens, and L.~L. Knowles.
\newblock Stem: species tree estimation using maximum likelihood for gene trees
  under coalescence.
\newblock {\em Bioinformatics}, 25(7):971--973, 2009.

\bibitem{Larget:2010BUCKy}
B.~R. Larget, S.~K. Kotha, C.~N. Dewey, and C.~An{\'e}.
\newblock {BUCKy: gene tree/species tree reconciliation with Bayesian
  concordance analysis}.
\newblock {\em Bioinformatics}, 26(22):2910--2911, 2010.

\bibitem{Legried:2019dl}
B.~Legried, E.~K. Molloy, T.~Warnow, and S.~Roch.
\newblock Polynomial-time statistical estimation of species trees under gene
  duplication and loss.
\newblock {\em BioRxiv}, page 821439, 2019.

\bibitem{Liu:2011NJst}
L.~Liu and L.~Yu.
\newblock {Estimating Species Trees from Unrooted Gene Trees}.
\newblock {\em Systematic Biology}, 60(5):661--667, 03 2011.

\bibitem{Liu:2010MP-EST}
L.~Liu, L.~Yu, and S.~V. Edwards.
\newblock A maximum pseudo-likelihood approach for estimating species trees
  under the coalescent model.
\newblock {\em BMC evolutionary biology}, 10(1), 2010.

\bibitem{Liu:2009STAR}
L.~Liu, L.~Yu, D.~K. Pearl, and S.~V. Edwards.
\newblock {Estimating Species Phylogenies Using Coalescence Times among
  Sequences}.
\newblock {\em Systematic Biology}, 58(5):468--477, 07 2009.

\bibitem{Mossel:2008GLASS}
E.~Mossel and S.~Roch.
\newblock Incomplete lineage sorting: consistent phylogeny estimation from
  multiple loci.
\newblock {\em IEEE/ACM Transactions on Computational Biology and
  Bioinformatics}, 7(1):166--171, 2008.

\bibitem{Rannala:2003MSC}
B.~Rannala and Z.~Yang.
\newblock Bayes estimation of species divergence times and ancestral population
  sizes using dna sequences from multiple loci.
\newblock {\em Genetics}, 164(4):1645--1656, 2003.

\bibitem{Rasmussen:2012unified}
M.~D. Rasmussen and M.~Kellis.
\newblock Unified modeling of gene duplication, loss, and coalescence using a
  locus tree.
\newblock {\em Genome research}, 22(4):755--765, 2012.

\bibitem{Vachaspati:2015astrid}
P.~Vachaspati and T.~Warnow.
\newblock Astrid: accurate species trees from internode distances.
\newblock {\em BMC genomics}, 16(S10), 2015.

\bibitem{Zhang:2018astral3}
C.~Zhang, M.~Rabiee, E.~Sayyari, and S.~Mirarab.
\newblock Astral-iii: polynomial time species tree reconstruction from
  partially resolved gene trees.
\newblock {\em BMC bioinformatics}, 19(6):153, 2018.

\bibitem{Zhang:2019astral-pro}
C.~Zhang, C.~Scornavacca, E.~Molloy, and S.~Mirarab.
\newblock Astral-pro: quartet-based species tree inference despite paralogy.
\newblock {\em bioRxiv}, 2019.

\end{thebibliography}

\end{document}